\definecolor{darkred}{rgb}{0.6,0.0,0.1}
\definecolor{darkgreen}{rgb}{0,0.5,0}
\definecolor{darkblue}{rgb}{0,0,0.5}
\renewcommand{\cite}{\citet}
\def\Ex{\mathop{{\mathbf E}\null}\nolimits}%
\def\1{\mathop{\mathbbm 1}\nolimits}
\DeclareMathOperator*{\argmin}{arg\,min}
\newcommand{\PP}{\mathbb{P}}
\DeclarePairedDelimiter{\mynorm}{\lVert}{\rVert}
\definecolor{dgreen}{rgb}{0,0.5,0}
\definecolor{dblue}{rgb}{0,0,0.9}
\definecolor{dred}{rgb}{0.6,0.0,0.1}
\definecolor{dgold}{rgb}{0.5,0.3,0.0}
\definecolor{dvio}{rgb}{0.6,0.3,0.5}
\definecolor{gray}{rgb}{0.5,0.5,0.5}
\theoremstyle{mysc}\newtheorem{prop}{Proposition}[section]
\theoremstyle{mysc}\newtheorem{coro}[prop]{Corollary}
\theoremstyle{mysc}\newtheorem{theo}[prop]{Theorem}
\theoremstyle{mysc}
\theoremstyle{mysc}\newtheorem{lem}[prop]{Lemma}
\theoremstyle{myex}\newtheorem{rem}{Remark}[section]
\theoremstyle{myex}
\theoremstyle{myex}
 \theoremstyle{mysc}\newtheorem{assA}{Assumption}
\newtheorem*{assumption*}{\assumptionnumber}
\providecommand{\assumptionnumber}{}
\numberwithin{equation}{section}
  \author{{\textsc{ Stephan Martin}}
  \thanks{Deutsche Bundesbank; Frankfurt, Germany. The paper represents the author's personal opinion and does not necessarily reflect the views of the Deutsche Bundesbank or its staff. Humboldt-Universit\"at zu Berlin, Spandauer Stra\ss e 1, 10178 Berlin, Germany. Financial support by the Collaborative Research Center TRR 190 "Rationality and Competition" is gratefully acknowledged. 
  e-mail:
\url{	stephan.martin@bundesbank.de}} \\
{\small \textit{  Deutsche Bundesbank / Humboldt-Universit\"at zu Berlin}}
}
 \title{Estimation of Conditional Random Coefficient Models using Machine Learning Techniques
 }
\begin{document}
   \maketitle
\begin{abstract}
Nonparametric random coefficient (RC)-density estimation has mostly been considered in the marginal density case under strict independence of RCs and covariates. This paper deals with the estimation of RC-densities conditional on a (large-dimensional) set of control variables using machine learning techniques. The conditional RC-density allows to disentangle observable from unobservable heterogeneity in partial effects of continuous treatments adding to a growing literature on heterogeneous effect estimation using machine learning. 
This paper proposes a two-stage sieve estimation procedure. First a closed-form sieve approximation of the conditional RC density is derived where each sieve coefficient can be expressed as conditional expectation function varying with controls. Second, sieve coefficients are estimated with generic machine learning procedures and under appropriate sample splitting rules. The $L_2$-convergence rate of the conditional RC-density estimator is derived. The rate is slower by a factor then typical rates of mean regression machine learning estimators which is due to the ill-posedness of the RC density estimation problem. The performance and applicability of the estimator is illustrated using random forest algorithms over a range of Monte Carlo simulations and with real data from the SOEP-IS. 
Here behavioral heterogeneity in an economic experiment on portfolio choice is studied. The method reveals two types of behavior in the population, one type complying with economic theory and one not. The assignment to types appears largely based on unobservables not available in the data.   
\end{abstract}
\section{Introduction}
In recent years microeconometric models aimed at capturing complex heterogeneity in individual behavior. 
In particular, it has become relevant to include observed and/or unobserved heterogeneity when modeling (average) partial effects in regression models. 

An important model that accounts very flexibly for such heterogeneity is the nonparametric random coefficient model
\begin{equation*}
Y = B_{0} + B_{1}W,
\end{equation*}
where $(B_{0}, B_{1})$ is a vector of $p+1$ random variables and $W \in \mathbb{R}^p$ is a vector of regressors. Let $X$ be a set of additional control variables that may be related to $(W,B_{0}, B_{1})$.
$B_{1}$ is the individual effect of a change in $W$ on $Y$ and the distribution of $B_1$ reflects the heterogeneity of individual effects in the population. 

The model is nonparametric in that it does not impose distributional assumptions on partial effects $B_1$ and the nuisance $B_0$. The primary goal for this class of models is to identify and estimate the entire distribution of the vector of random coefficients $(B_0, B_1)$ such as the joint density function. The marginal density of the random slope parameter $B_1$ is of special interest in economic applications as $B_1$ can be interpreted as an average partial effect of a change in $W$ on the outcome. Its distribution reflects the heterogeneity of this effect in the underlying population. If $W$ is an exogenous treatment then the expected value of $B_1$ corresponds to an average treatment effect, its conditional expectation to an conditional average treatment effect and so on.

A crucial identifying assumption in the literature is full independence of covariates $W$ and random coefficients $(B_0,B_1)$.
This is satisfied if $W$ is randomly assigned, such as in experimental data settings, and the marginal slope density captures the entire heterogeneity of a partial effect in the population. This knowledge does not allow to link the heterogeneity to any observable characteristics. For instance, the shape of the RC-density may vary across observable individual characteristics. Learning the RC-density conditional on a set of control variables $X$ provides additional insight on how the shape of the heterogeneity varies across subpopulations with differing observable characteristics. This allows to disentangle heterogeneity in \textit{observable} and \textit{unobservable} heterogeneity.

Furthermore, when dealing with observational data there is always room for a potential dependence between $W$ and control variables $X$. The random intercept $B_0$ subsumes the effects of $X$ on $Y$ which violates full independence between $W$ and random coefficients. 
In this work the identifying restriction can be weakened to allow for conditional independence, i.e. $B \Perp W |X$, which corresponds to a selection-on-observables assumption.  
In most economic applications the set of control variables $X$ is of considerable size or even high-dimensional and there is generally no prior knowledge about which variables in $X$ drive heterogeneous partial effects. Modern Machine Learning (ML) methods allow to deal with large dimensional set of controls and perform some form of variable selection to identify those elements of $X$ inducing different shapes of heterogeneity. 
Recently ML-techniques have proven useful in relating features of the distribution of partial effects $B_1$ to additional observable characteristics, such as in the estimation of conditional average treatment effects, also referred to as heterogeneous treatment effect.  
In this work, I link the entire distribution of random coefficients to observable characteristics by studying a \textit{conditional random coefficient model}. This allows to uncover (i) which variables generally drive heterogeneity in partial effects and (ii) how the distribution of partial effects varies across subpopulations with different observable characteristics. \\

First I begin by providing an identification statement for the conditional RC-density. Deriving from this identification statement I can formulate a sieve approximation to the RC-density conditional on a fixed value of $X$. This sieve approximation has a closed form expression and each sieve coefficient can be expressed as a conditional expectation function of some nonlinear transformation of $Y$ and $W$ varying with controls $X$. 

Generic ML-methods can be used to estimate this set of conditional expectation functions.
Various practical considerations to be outlined later require to orthogonalize both the outcome $Y$ and treatment $W$ using ML technqiues before estimating the sieve coefficients itself.  
 This requires the use of iterated sample splitting to deal with nested ML-steps. 

Following the outline of the estimation strategy I derive the $L_2$-convergence rate of the final conditional RC-density estimator. There convergence rate is crucially determined by the asymptotic properties of the ML-methods employed. 
Given that the slowest ML-estimator employed converges at a polynomial rate the $L_2$-convergence rate of the RC-density estimator is by a factor slower than that of the slowest ML-estimator. This factor hinges on the degree of ill-posedness of the underlying random coefficient problem and the overall smoothness of the density. 

In addition to the asymptotic properties of the estimator, I introduce a cross-validation strategy to inform the choice of tuning parameters. 

Finally I apply the estimator to study behavioral heterogeneity in an economic experiment on portfolio choice. Survey respondents of the german socio-economic panel (SOEP) are asked to invest an hypothetical monetary amount into a riskfree asset or into an risky asset with payoff depending on the return of a stock market index. I study the effect of stock market beliefs on the investment decision with a random coefficient model. I find a bi-modal RC-density that reflects the presence of two types in the population. One type complies with economic theory and the stock market beliefs have a positive impact on the amount invested in the risky asset. For a second type this is not the case and the effect is centered around zero. This type-division prevails when varying the values of controls. This suggests that the assignment to types is largely based on unobservable characteristics not in the data. An exemption is age, as for a subpopulation of elder respondents the share of non-compliers is substantially larger.     


\paragraph{Related literature}
Identification and estimation of the nonparametric random coefficient model is studied in \cite{beran1992}, \cite{beran1994}, \cite{beran1996} and \cite{hodermam_RC_2010}. See also \cite{masten17_RC} for a refined identification result. 
All of these works operate under the assumption that random coefficients and regressors are fully independent. \cite{Breunig_VRC} considers a so called \textit{varying random coefficient model} where each random coefficient is made up of a nonparametric function of control variables and an additively separable random component which is fully independent of controls. Further the number of control variables affecting random coefficients is effectively smaller than the number of random coefficients itself whereas in the present setting the number of controls is allowed to be larger.  
Sieve estimation for random coefficient models is used for the testing procedure in \cite{BreunigHoder18} and in \cite{Breunig_VRC}.  

Machine Learning estimation and econometric models have been paired frequently in recent years. 
\cite{ChernPostSelec}, \cite{ChernDebiased17} and \cite{chernozhukov2020debiased} study estimation and inference on parameters and linear functionals of parameters in high-dimensional linear models. Thereby highlighting the importance of iterated ML estimation and sample splitting for achieving consistency and asymptotic normality of parameter estimates. 
  
Of particular importance for this work is the estimation of heterogeneous treatment effects using machine learning methods as considered in \cite{AtheyWager18} and \cite{grf}, see also the references therein. This is due to the fact that a heterogeneous treatment effect can be viewed as the conditional expectation function of a random slope coefficient. In contrast, the conditional RC-density studied here is informative about the entire distribution of a treatment effect in a given (sub-)population. This also includes learning the form of \textit{unobservable} heterogeneity which remains otherwise unknown when only the mean of a random coefficient is studied. 
\cite{ChernGenericML} considers identification and estimation of particular subfeatures of the conditional expectation function of a random slope.

The theory of the conditional RC density estimate developed here holds for generic machine learning techniques. However I make use of the causal forest algorithms of \cite{grf} and the popular ML-tool of random forests introduced by \cite{Breiman01} in the implementation of the estimator and in the asymptotic theory. The asymptotic theory of random forest estimators has been studied in \cite{scornet2015}, \cite{WagerWalther}, \cite{AtheyWager18} and \cite{grf}.   
 \\


The remainder of this paper is organized as follows. Section \ref{Sec::ModelSetup} introduces the main model and discusses identification of conditional RC-densities. Section \ref{Sec::Est} outlines the estimation strategy. Section \ref{Sec::Asymp} presents the asymptotic properties of the estimator and Section \ref{Sec::Inference} asymptotic inference on the conditional RC density estimates. Section \ref{Sec::CVetc} addresses auxiliary topics, i.e. marginal density estimation, variable importance measures and the choice of tuning parameters. Section \ref{Sec::MCs} contains a Monte Carlo simulation study. Section \ref{Sec::Application} contains an empirical application of the estimation procedure using survey data. Section \ref{Sec::Conclusion} concludes.

\section{Model Setup and Identification}\label{Sec::ModelSetup}
This paper considers the following random coefficient model
\begin{equation}\label{RC_Model}
Y = B_0 + B_1\cdot W
\end{equation}
where $B=(B_0, B_1)$ consists of two scalar random variables and $W$ is a scalar regressor of interest to the researcher. Here $B_1$ is the average partial effect of a change in $W$ on the outcome $Y$. Within the model framework there exists a vector of additional covariates $X \subseteq \mathbb{R}^d$ that may affect both the random coefficients as well as the regressor $W$.
The goal of this section is to give conditions under which we can identify the conditional random coefficient density $f_{B|X=x}$. That is the random coefficient density for a given subpopulation with characteristics $X=x$. Note that the results of this paper can be readily extended to cover the case where $W$ is multivariate, though for conciseness I focus on the scalar $W$ case which is of the most practical relevance. 
    
The model (\ref{RC_Model}) can be interpreted as a reduced form of a more general multivariate random coefficient model with the random intercept absorbing all the (heterogenous) effects of other covariates on the outcome.  
Without loss of generality the random coefficients satisfy
\begin{equation*}
B_j = g_j(X) + A_j, \;\;\text{where}\;\; \Ex[A_j\;|\; X]=0, \;\; j=0,1
\end{equation*} 
which illustrates that (\ref{RC_Model}) nests many popular mean regression models. For instance it can be viewed as an extension of \cite{Robinson88} with a random coefficient instead of a deterministic one.
\\
For obtaining identification the following assumptions are imposed.
\begin{assA}\label{A::Ident1}
	(i) $B \Perp W \;|\; X$ (ii) for every $x$ in the support of $X$ the random variable $W \;|\; X=x$ has full support $\mathbb{R}$.
\end{assA}
Assumption \ref{A::Ident1} (i) requires that the regressor of interest $W$ is independent of random coefficients conditional on controls $X$. This restriction allows for some dependence between $B$ and $W$ and is thereby weaker then the full independence assumption typically encountered in random coefficient models, see \cite{beran1992}, \cite{hodermam_RC_2010} and \cite{masten17_RC}. It can also be interpreted as an exogeneity condition on the regressor $W$. If $W$ is a (quasi-)experimental intervention then (i), or more precisely the part $B_0 \Perp W \;|\;X $, corresponds to a selection on observables assumption. It is also one of the main assumptions for identifying heterogenous treatment effects as in the RC model in Section 6 of \cite{grf}.

Assumption \ref{A::Ident1} (ii) strengthens the common large support restriction from the random coefficients literature (see again e.g. \cite{hodermam_RC_2010}) to also hold conditional on $X=x$.
The assumption rules out the case where $W$ is a deterministic function of $X$ and may be problematic if otherwise certain realizations $x$ provide strong information about $W$. A workaround for this assumptions is provided by the varying RC model of \cite{Breunig_VRC}. There however functional form restrictions on the random coefficients need to be made which outlines a tradeoff between the above full support assumption and further restrictions on the random coefficient model.      
If this assumption is violated for some realizations of $X$ then nevertheless identification for different $x$-values which satisfy (ii) can be established.
\cite{masten17_RC} discusses identification in the case of bounded support of the regressor $W$. Taking this results in to account the following identification result can be formulated.       
\begin{lem}\label{Lem::Ident}
	If Assumption \ref{A::Ident1} holds then for every $x$ in the support of $X$ the density function $f_{B|X=x}$ is identified. 
	If $W \;|\; X=x$ has instead only compact support then $f_{B|X=x}$ is point identified if and only if the distribution of $B\;|\;X=x$ is determined solely by its moments and all absolute moments are finite.
\end{lem}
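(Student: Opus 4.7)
The plan is to identify the joint conditional characteristic function of $B$ given $X=x$ directly from the observable conditional law of $(Y,W)\mid X=x$ and then invert it. Fix $x$ in the support of $X$ and write $\psi_x(s_1,s_2):=\Ex[\exp(is_1 B_0+is_2 B_1)\mid X=x]$. For any $t\in\mathbb{R}$ and any $w$ in the support of $W\mid X=x$, conditional independence (Assumption \ref{A::Ident1}(i)) gives
\begin{equation*}
\phi_{Y\mid W=w,X=x}(t)\;=\;\Ex\bigl[\exp(it B_0+i(tw)B_1)\mid W=w,X=x\bigr]\;=\;\psi_x(t,tw),
\end{equation*}
and the left-hand side is a functional of the observed joint distribution of $(Y,W,X)$, hence is identified.

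Under Assumption \ref{A::Ident1}(ii), $w$ ranges over all of $\mathbb{R}$. Thus for any $(s_1,s_2)\in\mathbb{R}^2$ with $s_1\neq 0$, the choice $t=s_1$, $w=s_2/s_1$ pins down $\psi_x(s_1,s_2)$; on the axis $s_1=0$, the values $\psi_x(0,s_2)$ follow as continuous limits by joint continuity of characteristic functions. So $\psi_x$ is identified on all of $\mathbb{R}^2$ and Fourier inversion delivers $f_{B\mid X=x}$. This is the core of the full-support statement.

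For the compact-support case, say $W\mid X=x$ takes values in $[-\bar{w},\bar{w}]$, the same construction only recovers $\psi_x$ on the cone $\{(s_1,s_2):|s_2|\leq\bar{w}|s_1|\}$, which is in general insufficient to determine the law of $B\mid X=x$. Instead I pass to moments: for each $j\geq 0$,
\begin{equation*}
\Ex[Y^j\mid W=w,X=x]\;=\;\sum_{k=0}^{j}\binom{j}{k}\,w^k\,\Ex\bigl[B_0^{j-k}B_1^k\bigm|X=x\bigr]
\end{equation*}
is a polynomial of degree $j$ in $w$. A non-degenerate compactly supported $W\mid X=x$ possesses more than $j$ distinct support points, so all $j+1$ polynomial coefficients—and hence all mixed moments of $B\mid X=x$—are identified, provided they are finite. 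If $B\mid X=x$ is moment-determinate, its law and density are thereby pinned down uniquely; conversely, if moment-determinacy fails, two distinct laws share all observable content of $(Y\mid W,X=x)$ and identification breaks.

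The main delicate step is the necessity direction of the compact-support characterization: for a candidate law that is not moment-determinate one must exhibit a second, distinct law with the same mixed moments and verify that both induce the same conditional distribution of $Y\mid W=w,X=x$ for every $w$ in the support. Since $Y$ is a linear functional of $B$, equality of the full sequence of joint moments of $B\mid X=x$ transfers to equality of all moments of $Y\mid W=w,X=x$, and, where needed, appeal to \cite{masten17_RC} closes the loop. The finite-absolute-moments clause in the statement is there precisely so that the Hamburger moment sequence is well-defined to begin with.
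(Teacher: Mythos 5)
Your proposal is correct and follows essentially the same route as the paper: use conditional independence to equate the observable conditional characteristic function $\phi_{Y|W,X}(t|w,x)$ with $\psi_x(t,tw)$, use the full conditional support of $W$ to trace out $\psi_x$ on all of $\mathbb{R}^2$ (the paper leaves the $s_1=0$ axis implicit where you invoke continuity), and recover the density by Fourier inversion. For the compact-support case both you and the paper ultimately defer the moment-determinacy characterization, in particular the necessity direction, to \cite{masten17_RC}; your polynomial-in-$w$ moment argument is a reasonable sketch of the sufficiency half.
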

The proof of Lemma \ref{Lem::Ident} immediately follows from extending classical identification results for random coefficient models as synthesized in \cite{masten17_RC} to the conditional case. 
This section concludes with a remark on identification for the case where $W$ is a discrete, binary variable.
\begin{rem}
	The binary treatment case $W \in \{0,1\}$ is often more relevant in practical applications and identifying heterogenous average partial effects in a binary treatment model has received much attention recently, see e.g. \cite{AtheyWager18}.
	  
	  If $W$ depends on controls $X$, i.e. there is selection on observables into treatment, we can reformulate \ref{RC_Model} to 
	  \begin{align*}
	  Y =\widetilde B_0 + B_1\cdot (W-\Ex[W|X]), \;\;\text{with}\;\; \widetilde B_0=B_0+ B_1\cdot \Ex[W|X]
	  \end{align*}
	  and the orthogonalized regressor $W-\Ex[W|X]$  will be supported on a compact interval that is a subset of $[-1, 1]$, given the well known overlap assumption\footnote{For any $x$ in the support of $X$ there exists an $\epsilon > 0$ such that $\epsilon < \PP(W=1|X=x) < 1- \epsilon$ holds.} is satisfied. 
	  Then Lemma \ref{Lem::Ident} can be applied to obtain identification of the distribution $\tilde{B}_0, B_1 |X=x$ within the constraints set in Lemma \ref{RC_Model}. From this identification of the economically relevant random slope density $f_{B_1|X=x}$ can be inferred.
\end{rem}

\section{Estimation of Conditional Random Coefficient Densities}\label{Sec::Est}
This section introduces the estimation strategy for conditional RC-densities. Subsection \ref{Sec::EstIdea} outlines the principal idea and introduces the main notation whereas Subsection \ref{Sec::Demean} is of main practical relevance. There a demeaned random coefficient model is discussed and further details on machine learning estimators and sample splitting are provided.
Before moving forward the following general notation needs to be introduced.    
Let
\begin{align*}
\phi_{Y|X}(t|x)=\Ex[\exp(itY)\;|\; X=x]
\end{align*}
denote the characteristic function of $Y$ conditional on $X=x$. 
The Fourier transformation $\mathcal{F}$ and the inverse Fourier transformation $\mathcal{F}^{-1}$ are defined as
\begin{align*}
(\mathcal{F}f)(t) & =\int_{\mathbb{R}^d} \exp(it'a)f(a)da \\
(\mathcal{F}^{-1}g)(a) &= \frac{1}{(2\pi)^d}\int_{\mathbb{R}^d} \exp(-ia't)g(t)dt 
\end{align*}
for some functions $f,g: \mathbb{R}^d \to \mathbb{R}$. The operators $\mathcal{F}: \mathbb{R}^d \to \mathbb{C}^d$ and $\mathcal{F}^{-1}: \mathbb{C}^d \to \mathbb{R}^d$ relate the characteristic function of a random variable to its probability density function, given the latter exists. For some random variable $A$ with density $f_A$ it holds that $\phi_{A}(t) = (\mathcal{F}f_{A})(t)$ and vice versa $(\mathcal{F}^{-1}\phi_{A})(a)= f_{A}(a)$.

\subsection{A Two-Stage Sieve Estimation Approach}\label{Sec::EstIdea}
The essential implication of Assumption \ref{A::Ident1} that will be leveraged for estimation is the identity 
\begin{align*}
(\mathcal{F}f_{B|X=x})(t,tw) = \phi_{Y|X, W}(t|x,w)
\end{align*}
which holds for every $x$ in the support of $X$ and $t,w \in \mathbb{R}$. The identity in turn implies the following $L_2$-condition
\begin{align}\label{L2cond}
\int_{\mathbb{R}^2} \Big|(\mathcal{F}f_{B|X=x})(t,tw) - \phi_{Y|X, W}(t|x,w) \Big|^2 d\nu(t)d\mu(w) = 0
\end{align}
where $v,\mu$ are arbitrary probability measures on $\mathbb{R}$ that are discussed later in more detail.   
Following \cite{Breunig_VRC} the $L_2$-criterion in (\ref{L2cond}) can be used to construct a sieve estimator of the density $f_{B|X=x}$.  

To this end let $q^K = (q_1,\dots, q_K)$ denote a $K=K(n)$-dimensional vector of known basis functions that span the linear sieve space $\mathcal{B}_K=\{\phi(\cdot)=q^K(\cdot)'\pi \}$. As $f_{B|X=x}$ is bivariate, $q^K$ typically is a tensor product of univariate basis functions, i.e. $q^K(b_0, b_1)=q^{K_1}(b_0) \otimes q^{K_2}(b_1)$ with $K=K_1 \cdot K_2$. 

If the characteristic function $\phi_{Y|X, W}(\cdot|x,w)$ were known then a sieve estimator of the conditional random coefficient density is
\begin{align*}
\widetilde{f}_{B |X}(\cdot |x) = \arg \min_{\phi \in \mathcal{B}_K} \int_{\mathbb{R}^2} \Big|(\mathcal{F}\phi)(t,tw) - \phi_{Y|X, W}(t|x,w) \Big|^2 d\nu(t)d\mu(w) 
\end{align*}
which has the following closed form expression
\begin{align}\label{Est::closed}
\widetilde f_{B | X}(b |x)= q^K(b)'Q^{-1}   \int_{\mathbb{R}^2}  (\mathcal{F}q^K )(-t, -tw) \phi_{Y|X, W}(t|x,w) d\nu(t)d\mu(w) 
\end{align}
and where
\begin{align}
Q = \int_{\mathbb{R}^2}  (\mathcal{F}q^K )(t, tw) (\mathcal{F}q^K )'(-t, -tw)d\nu(t)d\mu(w).
\end{align}
Note that the estimator in (\ref{Est::closed}) is not feasible as $\phi_{Y|X, W}(t|x,w)$ is not known. \cite{Breunig_VRC} proceeds by replacing the unknown characteristic function with a nonparametric plug-in estimate. A general problem is the presence of the possibly large-dimensional set of controls $X$ that cannot be reduced a priori in most practical applications. \cite{Breunig_VRC} studies a varying random coefficient model which puts additional structure on the relationship of random coefficients and controls and where $X$ is low-dimensional.

Modern Machine Learning (ML-)estimators are well suited for the estimation of conditional expectation functions like $ \phi_{Y|X, W}(t|x,w)=\Ex[\exp(itY) |X=x, W=w]$ in the presence of possibly high-dimensional controls $X$. However an additional issue arises in that we would require to perform different Machine Learning steps over a continuum of values for $t$.

In order to enable the use of Machine Learning techniques a different strategy is needed. 
Further rearranging of (\ref{Est::closed}) yields 
\begin{align}\label{Est::closed_mu}
\widetilde f_{B | X}(b |x)= & q^K(b)'Q^{-1}   \int_{\mathbb{R}} \int_{\mathbb{R}}  \Ex[ (\mathcal{F}q^K )(-t, -tW)\exp(itY) | X=x, W=w ]  d\nu(t)d\mu(w) \nonumber \\ 
= &  q^K(b)'Q^{-1} \int_{\mathbb{R}}  \Ex[\int_{\mathbb{R}} (\mathcal{F}q^K )(-t, -tW)\exp(itY)d\nu(t) | X=x, W=w ]d\mu(w) \nonumber \\ 
= &q^K(b)'Q^{-1}  \int_{\mathbb{R}}  \Ex[T(W, Y) | X=x, W=w ]d\mu(w) 
\end{align}
where the operator $T(w, y) := \int_{\mathbb{R}} (\mathcal{F}q^K )(-t, -tw)\exp(ity)d\nu(t)$ is short-hand for the nonlinear mapping $ T: \mathcal{W}\times \mathcal{Y} \to \mathbb{R}^K $. The operator $T$ can be computed via numeric integration and thus I consider it deterministic for the remainder of this paper. Also for clarification let $T(W,Y)=(T_1(W, Y), \dots, T_K(W, Y))$ with functions $T_k:\mathcal{W}\times \mathcal{Y} \to \mathbb{R}$, for $k=1,\dots, K$. Further define the functions 
\begin{align*}
\pi_k(x, w)=\Ex[T_k(W,Y)|X=x, W=w]
\end{align*}  
for $k=1,\dots, K$ with $\pi(x,w)=(\pi_1(x,w), \dots, \pi_K(x,w)$.
A distinguishing feature of the sieve approximation in (\ref{Est::closed_mu}) is that sieve coefficients are the relevant quantity that varies in $X$. 

Finally I construct a feasible estimator by replacing $Q$ with a sample mean and $\pi(x,w)$ with a vector of Machine Learning estimates. The resulting RC-density estimator is
\begin{align}\label{Est::closed_no_mu}
\widehat f_{B | X}(b |x) = q^K(b)'\widehat Q^{-1} \int_{\mathbb{R}} \widehat \pi(x,w)d\mu(w)
\end{align}
where $\widehat \pi$ is a generic ML-estimate of the unknown function $\pi$ and 
\begin{align*}
\widehat Q = \frac{1}{n} \sum_{i=1}^{n} (\mathcal{F}q^K )(N_i, N_iM_i) (\mathcal{F}q^K )'(-N_i, -N_iM_i)
\end{align*}
where the $(M_i, N_i)$'s are $n$ Monte Carlo draws from the probability distributions $\mu$ and $\nu$ that are specified by the researcher. In principal $Q$ can be calculated directly via numerical methods for most measures $\mu, \nu$, however this representation is introduced here, as we will later consider the case where $\mu$ is the distribution of $W$ and use sample realizations $W_i$ instead of the generated $M_i$.

Notice that (\ref{Est::closed_no_mu}) is a direct estimate of the closed-form sieve projection in (\ref{Est::closed}). The sieve coefficients can be expressed in terms of different conditional expectation functions which can in turn be conveniently estimated by generic machine learning routines even if the set of controls $X$ is high-dimensional.  

\paragraph{Choice of weighting measures}

The choice of the measure $\mu$ leaves room for further simplification of the estimator in (\ref{Est::closed_no_mu}). If we choose $f_{W|X=x}$ as the density of the weighting measure $\mu$ then it holds that $\int_{\mathbb{R}} \pi(x,w)d\mu(w) = \Ex[T(W,Y)|X=x]$ and we only need to consider ML-estimation of a conditional expectation function and there is no need for additional weighting of this ML-estimator. This does slightly ease computation and simplifies the asymptotic analysis as asymptotic properties of ML-estimators of conditional expectation functions are readily available. The properties of the transformed ML-estimator $\int_{\mathbb{R}} \widehat \pi(x,w)d\mu(w)$ used to calculate the estimator in (\ref{Est::closed_no_mu}) have not been studied explicitly.

A caveat of choosing $f_{W|X=x}$ is that the matrix $Q$ will vary in $x$ which is problematic from both the computational as well as the theoretical stance\footnote{Each of the $K^2$ elements of $Q$ would need to be estimated by a ML-step. Further the asymptotic properties of a machine-learned matrix whose dimensions increase with the sample size remain unclear.}. A possible workaround is to orthogonalize $W$ and is detailed in the next section. However, this workaround will slow down the convergence of the RC-density estimator, as will be shown in Section \ref{Sec::Asymp}. Hence choosing $d\mu/dw=f_{W|X=x}$ as weighting measure is problematic. 

In any case choosing a $\mu$ that is related to the distribution of $W$ is appropriate. ML-estimators like $\widehat{\Pi}(x,w)$ perform best for points from the center of the distribution and will be less accurate in the tails. Moving forward I will focus on the case where $d\mu(w)/dw=f_W(w)$ which automatically weighs down areas where the ML-estimates may be less accurate. This respects the finite sample behavior of each machine learned sieve coefficient and reduces the problem of estimating $Q$ to a simple sample mean. The additional weighting of the ML-estimate $\widehat{\pi}(x,w)$ is a minor issue compared to the difficulties arising from alternative choices for $\mu$. \\

Following \cite{Breunig_VRC}, I choose $\nu$ to follow a $\text{lognormal}(0, \sigma_t)$ distribution where $\sigma_t > 0$ is then the second tuning parameter to be chosen by the researcher, along with $K$.  
This particular choice of weighting measure works well in the settings of \cite{Breunig_VRC} and also in the simulations and applications in this work. Theoretical justifications are given in \cite{BreunigHoder18}, \cite{Breunig_VRC} and in Section \ref{Sec::Est} but these do not preclude other choices of weighting distributions.  
\paragraph{Choice of sieve basis functions}
Throughout this paper I again follow \cite{Breunig_VRC} and choose Hermite functions as sieve basis $q^K$.

Hermite functions are a $L_2$-basis and have appealing theoretical properties. 
They are eigenfunctions of the Fourier transformation and satisfy $\mathcal{F}q_k(a,b)=\sqrt{2\pi}i^{k-1}q_k(a,b)$. This property simplifies the computation of the estimator considerably. 
A drawback of Hermite functions is that most of the support concentrates around zero even if $K$ is moderately large. Thus any moderately sized sieve approximation will fail to be a good approximation of a density function that is centered away from zero and/or has a particularly large support. This is a major motivation for considering a demeaned random coefficient model in the next subsection.\\

\subsection{Demeaning of Random Coefficients}\label{Sec::Demean}
This section discusses estimation of a demeaned version of the random coefficient model in (\ref{RC_Model}). 
The reason is that if marginal densities of $B_0$ and $B_1$ are centered away from zero, estimation of the bivariate density function with a Hermite function sieve will require a possibly large choice of $K$ and thus prohibitively many ML-steps.  
The computational cost associated with each ML-step and the general ill-posedness of the RC-density estimation problem lead to a strong preference for a coarse choice of $K$. 

Second, as the random slope density is of particular interest in economics, specific ML-routines have already been developed which provide high quality estimates for the conditional expectation $\Ex[B_1 |X]$, see \cite{AtheyWager18} and \cite{grf}. By demeaning we can separate estimation of the conditional expectation from the remaining conditional shape of the RC-density. This enables the use of ML-tools that are tailored to the specific predictive tasks such as the estimators in \cite{grf} for conditional expectation function of $B_1$. Further this direct estimators of the conditional expectation will perform better than those one can infer from an indirect estimate via integrating the entire conditional density function. \\  

We can reformulate the original RC-model (\ref{RC_Model}), 
\begin{align*}
Y-\Ex[Y|X,W] & = A_0 + A_1\cdot W, \\
A_0 &=  B_0 - \Ex[B_0|X]  \;\;\text{and}\;\; A_1 = B_1 - \Ex[B_1|X]   
\end{align*}
and estimate the joint density of the demeaned random coefficients $f_{A|X=x}$ with the procedure outlined in the previous section. To this end let $\beta(x)=\Ex[B|X=x]$ with $\beta(x)=(\beta_0(x), \beta_1(x))$ denoting the conditional expectation of the intercept and slope respectively. Further let $m(x,w)=\Ex[Y|X=x, W=w]$.
Then the closed form of the sieve approximation analogous to the previous section is
\begin{align}\label{Est::SieveApproxDemean}
\widetilde f_{B | X}(b |x) &= \widetilde f_{A | X}(b - \beta(x)  |x) \nonumber \\
& = q^K(b - \beta(x))'Q^{-1}\int_{\mathbb{R}}\Ex[T(W, Y- m(X,W )) | X=x, W=w]d\mu(w) 
\end{align}
with 
\begin{align*}
Q = \int_{\mathbb{R}^2}  (\mathcal{F}q^K )(t, tw)(\mathcal{F}q^K )'(-t, -tw)d\nu(t)d\mu(w).
\end{align*}
Further define
\begin{align*}
\Pi(x)& =\int_{\mathbb{R}}\Ex[T(W, Y- m(X,W)) | X=x, W=w]d\mu(w) \\
\Pi_{dm}(x)& =\int_{\mathbb{R}}\Ex[T(W, Y- \widehat m(X,W )) | X=x, W=w]d\mu(w) 
\end{align*}
with $\widehat m$ denoting a generic (ML)-estimator for the unknown function $m$.
By choosing $d\mu(w)/dw=f_W(w)$ an estimator is
\begin{align}\label{Final_Estimator}
\widehat f_{B | X}(b |x) &= \widehat f_{A | X}(b - \widehat \beta(x)  |x) \nonumber \\
& = q^K(b - \widehat  \beta(x))'\widehat Q^{-1}\widehat{\Pi}_{dm}(x)
\end{align}
where 
\begin{align}\label{Est Q}
\widehat Q = \frac{1}{n} \sum_{i=1}^{n}  (\mathcal{F}q^K )(t, t\cdot W_i) (\mathcal{F}q^K )'(-t, -t\cdot W_i)d\nu(t)
\end{align} 
and $\widehat{\Pi}_{dm}(x)$ is an ML-estimate of $\Pi_{dm}(x)$. The conditional expectation $\Ex[T(W, Y- \widehat m(X,W )) | X=x, W=w]$ can be conveniently estimated with ML methods, but it remains to construct an estimate for the quantity $\int_{\mathbb{R}}\Ex[T(W, Y- \widehat m(X,W )) | X=x, W=w]d\mu(w)$. I suggest to use 
\begin{equation}\label{sample_mean_trafo}
	\widehat{\Pi}_{dm}(x) = \frac{1}{R}\sum_{r=1}^{R} \widehat \Ex[T(W, Y- \widehat m(X,W )) | X=x, W=W_r]
\end{equation}
where $\widehat \Ex[T(W, Y- \widehat m(X,W )) | X=x, W=w]$ is an ML-estimator of the respective conditional expectation function and thus $\widehat{\Pi}_{dm}(x)$ is a sample average of different predictions from the ML-estimators over a hold-out sample of size $R$ which has not been used in the estimation before.
Another possibility, that does not rely on a hold-out sample is to use a leave-one-out ML-estimator.   
In the applications and simulation studies I simply calculate (\ref{sample_mean_trafo}) on the entire sample observations for $W$.
This is theoretically not valid, yet in simulations there is practically no difference between using (\ref{sample_mean_trafo}) on the entire sample for $W$ or an equally-sized hold out sample of $W$.   

Therefore, I assume for the remainder of the paper that
\begin{equation}\label{Final Pi_dm}
\widehat{\Pi}_{dm}(x) = \int_{\mathbb{R}} \widehat \Ex[T(W, Y- \widehat m(X,W )) | X=x, W=w]d\mu(w)
\end{equation}
so when studying the asymptotics of (\ref{Final Pi_dm}) in the next Section, it is implicitly assumed that the rather slow ML-estimators dominate the asymptotic behavior of $\widehat{\Pi}_{dm}$, i.e. convergence of the sample mean to the integral is negligible compared to the convergence of ML-estimators.  


The estimator (\ref{Final_Estimator}) nests several ML-estimates and it is therefore apparent that sample splitting is required to achieve consistency.  

A particular requirement is that $\widehat m$ is calculated on a different sample than $\widehat \Pi_{dm}$ which takes  $\widehat m$ as input. 
\\
The use of sample splitting is somewhat mandatory for nested ML-estimators, see \cite{ChernPostSelec}.

The subsequent paragraph outlines the precise use of sample splitting along with a concise summary of the estimation procedure.
\paragraph{Estimation Procedure} \hfill \\
\hfill \\
\hfill
The sample is $(X_i, W_i, Y_i)$ with $i=1,\dots,n$. Set tuning parameters $K$ and $\nu$. 
\begin{enumerate}[ Step  1:]
	\item Calculate $\widehat \beta (x)$ and $\widehat{Q}$ on the full sample. 
	\item Randomly split the sample in two parts of equal size $n/2$. The two samples are referred to as sample $\mathcal{D}$ and sample $\mathcal{R}$. 
	\item Use sample $\mathcal{D}$ as training sample to learn $\widehat m$ with some ML method.
	\item Taking $\widehat m$ as given, use sample $\mathcal{R}$ to learn $ \widehat \Pi_{dm}$ with some ML method.
\end{enumerate}
Then perform cross-fitting, i.e. iterate steps 2-4 a number of $M$ times to obtain $M$ different estimates $\widehat \Pi_{dm, m}(x)$ for $m=1,\dots, M$. Then aggregate these to a final conditional RC-density estimate 
\begin{align}\label{Est::Final}
 \widehat f_{B | X}(b |x)  = \frac{1}{M}\sum_{m=1}^{M} q^K(b - \widehat\beta (x))'\widehat Q^{-1} \widehat{\Pi}_{dm, m}(x)
\end{align}
The cross-fitting procedure is optional, yet highly recommended as it stabilizes the estimates considerably. \\

In many works linking causal inference with machine learning methods, orthogonalization of treatments $W$ is often mandatory to achieve consistent estimation of causal effects, see e.g. \cite{ChernPostSelec} or at least desirable for the performance of machine learning methods, see Section 6.1.1. in \cite{grf}. Therefore this section ends with a brief discussion on how to handle the case of a demeaned $W$ in the estimation. In the next section it is shown that orthogonalization of the treatment $W$ is not innocuous in the RC model, as it slows down the convergence rate of the RC-density estimator. 
\begin{rem}\label{Orthog_W}
Additional orthogonalization of $W$ leads to a random coefficient model
\begin{align*}
Y-\Ex[Y|X,W] & = A_0 + A_1\cdot (W-E[W|X]), \\
A_0 &=  B_0 - \Ex[B_0|X]+A_1\Ex[W|X]  \;\;\text{and}\;\; A_1 = B_1 - \Ex[B_1|X]   
\end{align*}		
which does not change the interpretation of the random slope. 
Define $\Ex[W|X=x]=g(x)$ and the variable  $\overline W = W - g(X)$, then the estimation of $g$ needs to be taken into account and the following quantities reformulated to
\begin{align*}
\widetilde f_{A | X}(b - \beta(x)  |x) & = q^K(b - \beta(x))'Q^{-1}\int_{\mathbb{R}}\Ex[T(\overline W, Y- m(X,W )) | X=x, \overline W=\overline w]d\mu( \overline w) 
\end{align*}
with 
\begin{align*}
Q = \int_{\mathbb{R}^2}  (\mathcal{F}q^K )(t, t\overline{w})(\mathcal{F}q^K )'(-t, -t\overline w)d\nu(t)d\mu(\overline w).
\end{align*}  
Then the estimation procedure needs to be amended. In Step 3, the function $g$ is estimated additionally by an ML- estimator $\widehat g$. In Step 4 we use sample $\mathcal{R}$ to estimate $Q$ as well, in particular
\begin{align*}
\widehat Q = \frac{1}{|\mathcal{R}|} \sum_{i=1}^{|\mathcal{R}|}  (\mathcal{F}q^K )(t, t\cdot (W_i-\widehat{g}(X_i))) (\mathcal{F}q^K )'(-t, -t\cdot (W_i-\widehat{g}(X_i)) )d\nu(t)
\end{align*}  
\end{rem}

\section{Asymptotic Analysis}\label{Sec::Asymp}
The following section develops the asymptotic theory of the estimator in (\ref{Est::Final}) with its composite parts in (\ref{Est Q}) and (\ref{Final Pi_dm}). Estimation in the orthogonalized $W$ case outlined in Remark \ref{Orthog_W} is also considered.  
The following notation is required. 
Let $\lambda_{min}(\Omega), \lambda_{\max}(\Omega)$ denote the smallest and largest eigenvalues of a matrix $\Omega$. Further define the $L_2$-norm $\mynorm{g}=\int |g(a)|^2 da$ and the weighted $L_2$-norm  $\mynorm{g}_{\nu, \mu}=\int |g(t,x)|^2 d\nu(t)d\mu(x)$ for a generic, possibly complex-valued function $g$. To avoid confusion at some points, $\mynorm{\cdot}_E$ denotes the euclidean norm of a (complex) vector. $P_K g$ denotes the $L_2$-projection on a linear sieve space $\mathcal{B}_K$, i.e. $P_K g=\arg \min_{f\in \mathcal{B}_K} \mynorm{g - f}$. The relation $a \lesssim b$ is shorthand for $a \leq C\cdot b$ for some constant $C>0$.    

The following set of assumptions is necessary. 
\begin{assA}\label{A::Rate}
	(i) $\sup_{b\in \mathbb{R}^2} \mynorm{q^K(b)}\lesssim \sqrt{K}$ (ii) the smallest eigenvalue of $Q$ satisfies $\lambda_{\min}(Q)= O(\tau_K)$ with $\tau_K\geq 0$ and $\tau_K$ decreasing to zero and $\lambda_{\max}(\int_{\mathbb{R}^2} q^K(b)q^K(b)'db)=O(1)$ (iii) for any $x$ in the support of $X$ we have that $\mynorm{P_K{f}_{A|X=x} - {f}_{A|X=x}}=O(K^{-\alpha})$ for some $\alpha >0$ and $\mynorm{\mathcal{F}{f}_{A|X=x} - \mathcal{F}P_K{f}_{A|X=x} }_{v,\mu}=O(\tau_K \mynorm{P_K{f}_{A|X=x} - {f}_{A|X=x}})$ 
	(iv) 
	$\int_{\mathbb{R}} t^2 d\nu(t) < \infty$
	 and $\sup_{x \in \mathcal{X}} |\Pi_k(x) | \leq C_1$ for any $k$ and some $C_1 > 0$
	(v) for any $x$ in the support of $X$ it holds  $\int_{\mathbb{R}^2} \mynorm{\nabla f_{A|X}(a|x)da} \leq C_2$ for some constant $C_2>0$. 
\end{assA}
Assumption (\ref{A::Rate}) (i) is satisfied for the most commonly employed sieve bases such as splines, fourier series or wavelets, see e.g.\cite{belloni2012} as well as for Hermite functions. Sufficient conditions for Assumptions (ii) and (iii) are given in \cite{Breunig_VRC} in the absence of the measure $\mu$ in (\ref{L2cond}). With the additional measure the eigenvalue decay $\tau_K$ will generally depend on $\mu$, i.e. in my preferred specification on the distribution of $W$. Simulations show that the decay is faster the more light-tailed the distribution of $W$ and the smaller its support. Condition (iii) is a typical assumption on the approximating properties of the basis functions and the parameter $\alpha$ is solely related to the smoothness of the density functions $f_{A|X=x}$ as the dimension of the random coefficient vector is not of interest in this analysis, see e.g. \cite{Chen07} for a review of approximation properties of various sieve bases across different smoothness classes. The remaining parts (iv) and (v) are standard regularity conditions on the density $f_{A|X=x}$, in particular (iv) imposes that for any $x$ the $L_2$-projection of $f_{A|X=x}$ has bounded coefficients. 
\begin{assA}\label{A::MLRates}
(i) For any $k=1,\dots, K$ and fixed $x,w$ assume that    
\begin{align*}
& \max \Big \{ (\widehat{g}(x)-g(x))^2,  (\widehat{m}(x,w)-m(x,w))^2,  (\widehat{\beta}_0(x)-\beta_0(x))^2, (\widehat{\beta}_1(x)-\beta_1(x))^2,  \\ 
& \quad\quad\;\; (\widehat{\Pi}_{dm,k}(x)-\Pi_{dm, k}(x))^2  \Big \} = O_p\left(n^{-2\varphi} \right)
\end{align*}  
(ii) $K\tau_K^{-1}\log(K)=o(n)$ (iii) $K\tau_K\log(K)=o(n^{1-2\varphi})$ 
\end{assA}
Assumption $\ref{A::MLRates}$ (i) states an abstract upper bound for the pointwise convergence rates of various ML-estimates. Thereby one can abstract from considering different convergence rates for each ML-estimator by simply focusing on the slowest rate among those ML-estimators employed. Typically for any ML-method $\varphi< 1/2$. Part (ii) is a common rate restriction in the series estimation literature to achieve that $\mynorm{\widehat{Q}^{-1}-Q^{-1}} \to 0 $, see \cite{BellChern_Series}. The last part (iii) is an additional rate restriction that is trivially satisfied if $\tau_K^{-1}= K^\gamma$ with $\gamma >1$, i.e. if the eigenvalue decay $\tau_K$ is sufficiently fast. It holds more generally if $\varphi$ is sufficiently small.

%
  
In general the particular convergence rates depend on factors such as the effective dimension and the smoothness of the conditional expectation function that is to be estimated. An additional aspect is the proper choice of tuning parameters for any ML-method that is applied and the precise notion of high-dimensionality, i.e. the rates at which $\dim(X)$ may go to infinity relative to the sample size. In order to derive these convergence rates additional restrictions will be required. 
 
Thus Assumption $\ref{A::MLRates}$ abstracts from many theoretical and practical details of the ML-techniques employed. However, the complexity of any given ML-method makes the joint parameter choice of our model tuning parameters $K$ and $\sigma_t$ along with other parameters of the ML-routines impractical. Therefore I suppose that any ML-estimator used has been properly tuned by e.g. data-driven methods to the prediction task at hand. Thus the rate in Assumption \ref{A::MLRates} can be viewed as the best rate achievable given a set of ML-estimators that are properly tuned to their respective estimation problem. This abstraction is in line with other works using generic machine learning techniques such as \cite{ChernPostSelec} or \cite{ChernGenericML}.   
 
Below Assumption \ref{A::MLRates} is discussed in the context of regression forests which will be used in the applied segments of this paper.  
 \begin{rem}\label{Rem::ForestRates}
 Suppose estimates for the various functions summarized in Assumption \ref{A::MLRates} are obtained from applying random forest algorithms. Originally devised by \cite{Breiman01}, random forests are a popular ML-tool among practitioners and several works have since considered the asymptotic properties of random forests. Some theoretical properties like consistency have been established for various tree-growing schemes, see e.g. \cite{Biau12}, \cite{scornet2015} and \cite{WagerWalther} but the development of theory is ongoing.
 
 For obtaining pointwise results as in Assumption \ref{A::MLRates} (i) we can invoke Theorem 3.1 in \cite{AtheyWager18}. From that it follows for any estimate of a conditional expectation function that    
 \begin{align*}
n^\varphi= n^{1-b} \cdot \log(n^{b})^d 
 \end{align*}  
 where $b$ satisfies
 \begin{align*}
 b_{\min}:= 1- \left(1 + \frac{d}{\pi} \frac{\log(\omega)^{-1}}{\log((1-\omega)^{-1})}     \right)^{-1} < b < 1
 \end{align*}
 and where $\omega, \pi$ are hyperparameters of the forest algorithm, i.e. the regularity parameter and splitting probability. For additional assumptions to obtain this result, see Theorem 3.1 of \cite{AtheyWager18}. Smoothness assumptions on the underlying conditional expectation function and other regularity conditions are required. The result provides a worst-case convergence rate and obtaining optimal rates in high-dimensional settings remains an open question. The generalized random forest algorithm of \cite{grf} yields a similar result.          
 
 An additional example that will be referred to later is \cite{WagerWalther} which derive $L_2$-rates under additional sparsity assumptions, for a different class of random forest algorithms. 
 In their Theorem 4, \cite{WagerWalther} establish that for any estimate $\widehat{\tau}(x)$ of a conditional expectation function $\tau(x)$ it holds that
 \begin{align*}
 \Ex[(\widehat{\tau}(X) -\tau(X))^2 ] =O( n^{\log(\xi)/\log(2\xi)})
 \end{align*} 
 where $\xi = 1/(1-3/(4q))$ and $q$ is the effective dimension of the true conditional expectation function. See their Theorem for more details. They admit a high-dimensional setting where the number of covariates may grow with the sample size\footnote{More precisely $\lim \inf d/n > 0$} but need additional restrictions 
 on minimum effect sizes of some covariates, see in particular their Assumptions 3 and 4.   
 \end{rem}
The Remark above gives convergence rates for random forest estimators of conditional expectation functions. Further Assumption \ref{A::MLRates} imposes a convergence rate on $\widehat{\Pi}_{dm,k}(x)$ which is by definition (\ref{Final Pi_dm}) itself an average of different random forest estimates by averaging over $W$. Thus its convergence rate can be expected to be faster then the rate of the random forest estimate $\widehat \Ex[T(W, Y- \widehat m(X,W )) | X=x, W=w]$  
for any fixed $w$. \\

Finally the following convergence rate result holds.      
\begin{theo}\label{Thm::Rate} 
	Under Assumptions \ref{A::Ident1}- \ref{A::MLRates} it holds that
	\begin{align*}
	\int \left[\widehat{f}_{B|X}(b|x) - {f}_{B|X}(b|x)\right]^2db = O_p(\tau^{-2}_K \frac{K}{n^{2\varphi}} + K^{-\alpha})
	\end{align*}
\end{theo}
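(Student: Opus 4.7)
The plan is to use a bias--variance decomposition. With the sieve estimator of the demeaned density written as $\widehat f_{A|X}(a|x) = q^K(a)'\widehat Q^{-1}\widehat{\Pi}_{dm}(x)$ and its infeasible target $\widetilde f_{A|X}(a|x) = q^K(a)'Q^{-1}\Pi(x)$, set $\widetilde f_{B|X}(b|x) = \widetilde f_{A|X}(b-\beta(x)|x)$. The triangle inequality gives
\begin{align*}
\int \bigl[\widehat f_{B|X}(b|x) - f_{B|X}(b|x)\bigr]^2 db & \lesssim \int \bigl[\widehat f_{B|X}(b|x) - \widetilde f_{B|X}(b|x)\bigr]^2 db \\
& \quad + \int \bigl[\widetilde f_{B|X}(b|x) - f_{B|X}(b|x)\bigr]^2 db.
\end{align*}
Because the $L_2$-norm is translation-invariant, the second (bias) term equals $\|\widetilde f_{A|X}(\cdot|x) - f_{A|X}(\cdot|x)\|^2$. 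The second part of Assumption \ref{A::Rate}(iii) links the Fourier-weighted criterion minimised by $\widetilde f_{A|X}$ to the $L_2$-distance between projection and target via the ill-posedness factor $\tau_K$, and combined with the approximation rate from its first part yields the bias contribution $O(K^{-\alpha})$.

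For the stochastic part I would decompose
\begin{align*}
\widehat f_{B|X}(b|x) - \widetilde f_{B|X}(b|x) &= \bigl[q^K(b-\widehat\beta(x)) - q^K(b-\beta(x))\bigr]'\widehat Q^{-1}\widehat{\Pi}_{dm}(x) \\
& \quad + q^K(b-\beta(x))'\bigl[\widehat Q^{-1} - Q^{-1}\bigr]\widehat{\Pi}_{dm}(x) \\
& \quad + q^K(b-\beta(x))'Q^{-1}\bigl[\widehat{\Pi}_{dm}(x) - \Pi(x)\bigr],
\end{align*}
and bound the integrated square of each piece separately. Using translation-invariance of $\int q^K q^{K\prime} db$, the eigenvalue bound $\lambda_{\max}(\int q^K q^{K\prime} db) = O(1)$ from Assumption \ref{A::Rate}(ii) and $\lambda_{\max}(Q^{-1}) = O(\tau_K^{-1})$, the third (dominant) term is controlled by $\tau_K^{-2}\|\widehat{\Pi}_{dm}(x) - \Pi(x)\|_E^2$; summing the coordinate-wise ML rate of Assumption \ref{A::MLRates}(i) over the $K$ entries gives $O_p(\tau_K^{-2} K n^{-2\varphi})$, which matches the leading order of the claimed rate.

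For the middle piece I would apply standard series-perturbation arguments as in \cite{BellChern_Series}, controlling $\widehat Q - Q$ entrywise by a Bernstein-type argument using Assumptions \ref{A::Rate}(i)--(ii), and combine with $\|\widehat{\Pi}_{dm}(x)\|_E = O_p(\sqrt{K})$ (from $\sup_{x,k}|\Pi_k(x)|\leq C_1$ of Assumption \ref{A::Rate}(iv) plus the already-controlled estimation error). The rate restrictions in Assumption \ref{A::MLRates}(ii)--(iii) are precisely those needed to make this contribution negligible relative to the leading term. For the first piece a mean-value expansion in the shift $\widehat\beta(x) - \beta(x) = O_p(n^{-\varphi})$, together with the bounded $L_1$-gradient of $f_{A|X}$ from Assumption \ref{A::Rate}(v) passed to the sieve estimate through the already-controlled error pieces (via Plancherel and the Fourier-eigenfunction property of the Hermite basis), shows this term is also of smaller order than $\tau_K^{-2} K n^{-2\varphi}$.

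The main obstacle will be verifying that $\widehat Q^{-1} - Q^{-1}$ is controlled at the required rate: the ill-posedness factor $\tau_K^{-1}$ enters at least twice into its operator norm, and Assumption \ref{A::MLRates}(ii)--(iii) were specifically tailored to balance this against the ML-error driving the leading term. A secondary challenge is converting the density-gradient bound in Assumption \ref{A::Rate}(v) into an $L_2$-bound applicable to the sieve estimator uniformly in the random direction $\widehat\beta(x)-\beta(x)$; a clean route is to first replace $\widehat f_{A|X}$ by $\widetilde f_{A|X}$, absorb the difference into the already-bounded middle and third pieces, and then exploit the smoothness of the true density to bound the residual shift.
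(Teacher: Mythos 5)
Your overall architecture --- bias/variance split, a three-way decomposition of the stochastic term into a $\widehat Q^{-1}-Q^{-1}$ piece, a $\widehat\Pi_{dm}-\Pi$ piece and a cross term, Rudelson-type control of $\widehat Q$, and the leading order $\tau_K^{-2}Kn^{-2\varphi}$ from summing coordinatewise ML rates --- coincides with the paper's. Two steps, however, have genuine gaps.

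First, your treatment of the $\widehat\beta$-shift routes it the wrong way. You place the shift inside the basis, as $[q^K(b-\widehat\beta(x))-q^K(b-\beta(x))]'\widehat Q^{-1}\widehat\Pi_{dm}(x)$, and propose a mean-value expansion. But the Hermite derivative recurrence $\partial q_k/\partial b=\sqrt{k/2}\,q_{k-1}-\sqrt{(k+1)/2}\,q_{k+1}$ makes $\lVert Dq^K\rVert$ grow by an extra factor of order $\sqrt K$ relative to $\lVert q^K\rVert\lesssim\sqrt K$; combined with $\lVert\widehat Q^{-1}\widehat\Pi_{dm}(x)\rVert_E\lesssim\tau_K^{-1}\sqrt K$ and $\lVert\widehat\beta(x)-\beta(x)\rVert_E^2=O_p(n^{-2\varphi})$, the natural bound on this piece is of order $\tau_K^{-2}K^2n^{-2\varphi}$ --- larger than the claimed leading term by a factor $K$. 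This is precisely the mechanism that forces the slower rate in the paper's Corollary when a generated regressor sits inside Hermite functions, so it cannot be waved away as lower order. The paper avoids it by adding and subtracting the \emph{true} density at the estimated shift, $f_{A|X}(\cdot-\widehat\beta(x))$: translation invariance gives $\lVert\widehat f_{A|X}(\cdot-\widehat\beta(x))-f_{A|X}(\cdot-\widehat\beta(x))\rVert=\lVert\widehat f_{A|X}-f_{A|X}\rVert$, and the residual shift $\lVert f_{A|X}(\cdot-\widehat\beta(x))-f_{A|X}(\cdot-\beta(x))\rVert$ is controlled by the integrated gradient of the true density (Assumption \ref{A::Rate}(v)) times $\lVert\widehat\beta(x)-\beta(x)\rVert_E^2$, i.e.\ $O_p(n^{-2\varphi})$ with no $K$ or $\tau_K$ inflation. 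Your closing remark gestures at a repair, but replacing $\widehat f_{A|X}$ by $\widetilde f_{A|X}$ does not suffice: Assumption \ref{A::Rate}(v) bounds the gradient of $f_{A|X}$, not of its weighted sieve projection, so the shift must be made to land on the true density.

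Second, you treat $\widehat\Pi_{dm}(x)-\Pi(x)$ as if it were covered by Assumption \ref{A::MLRates}(i), but that assumption only controls $\widehat\Pi_{dm}(x)-\Pi_{dm}(x)$, where $\Pi_{dm}$ is the conditional expectation of $T(W,Y-\widehat m(X,W))$ --- the second-stage target itself depends on the first-stage estimate $\widehat m$. The remaining discrepancy $\lVert\Pi_{dm}(x)-\Pi(x)\rVert_E^2$ must separately be shown to be $O(Kn^{-2\varphi})$, which the paper does in an auxiliary lemma by bounding the derivative of $T_k$ in its $y$-argument using the Fourier-eigenfunction property of the Hermite basis and $\int t^2\,d\nu(t)<\infty$ from Assumption \ref{A::Rate}(iv). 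Without this step the chain from the feasible estimator to the infeasible target $\Pi(x)$ is not closed.
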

Here the decay of eigenvalues $\tau_K$ of the matrix $Q$ serves as a measure of ill-posedness. The estimation of the RC density is known to be an ill-posed problem which implies a slower convergence rate of estimators, see \cite{hodermam_RC_2010} or \cite{Breunig_VRC}. If $\tau_K$ decays polynomially, e.g. $\tau_K \sim K^{- \gamma/2}$ then choosing $K\sim n^{\frac{2\varphi}{1+\gamma+\alpha}}$ balances bias and variance and results in the convergence rate 
\begin{align*}
\int \left[\widehat{f}_{B|X}(b|x) - {f}_{B|X}(b|x)\right]^2db = O_p(n^{-\frac{\alpha}{1+\alpha+\gamma}2\varphi}).
\end{align*}

This shows that the convergence rate $\varphi$ of the generic machine learning estimators is slowed down by a factor $\alpha/(1+\alpha+\gamma) < 1$. This loss of speed is increasing in the eigenvalue decay parameter $\gamma$ and decreasing in the smoothness parameter $\alpha$. Note that the density considered  here is bivariate and thus there is no explicit parameter for the number of random coefficients. If $W$ is multidimensional, its dimension enters the factor and further slows down convergence.  

This rate result is not sharp in that the convergence rate can be improved for a given ML-technique. As $\varphi$ depends on tuning parameters specific to the chosen ML- technique, a joint choice of $K$ and the tuning parameters subsumed in $\varphi$ may improve the rate of convergence. However calculating these exact rates may prove difficult and in practice, joint tuning of $K$ along with the parameters of the specific ML-techniques leads to excessive computational costs. Further note that using the result from \cite{WagerWalther} in Remark \ref{Rem::ForestRates} a rate for $\Ex[\int \left[\widehat{f}_{B|X}(b|X) - {f}_{B|X}(b|X)\right]^2db]$ can be derived analogously. \\

In Remark \ref{Orthog_W}, estimation with orthogonalized treatment $W$ is considered. This is important for some ML estimators applied in causal inference. For the estimation of $\beta_1(x)=\Ex[B_1|X=x]$, \cite{grf} suggest orthogonalization of both the outcome $Y$ and treatment $W$ to improve the performance of the random forest routines involved, see Section 6.1.1 in \cite{grf}. The subsequent Corollary shows that in the context of random coefficient models orthogonalization of $W$ is not innocuous, as it slows down the convergence rate of the RC-density estimator. This is in contrast to Theorem \ref{Thm::Rate} where sole orthogonalization of the outcome $Y$ does not result in a slower rate.  
\begin{coro}
	Let Assumptions \ref{A::Ident1}- \ref{A::MLRates} hold. Consider the orthogonalized W case outlined in Remark \ref{Orthog_W}. 
	Assume additionally that $K=o(\tau_K^{-1})$ then
	\begin{align*}
		\int \left[\widehat{f}_{B|X}(b|x) - {f}_{B|X}(b|x)\right]^2db = O_p(\tau^{-2}_K \frac{K^2}{n^{2\varphi}} + K^{-\alpha})
	\end{align*}
\end{coro}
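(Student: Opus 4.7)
\textit{Proof plan.} My plan is to retrace the proof of Theorem~\ref{Thm::Rate} and isolate the single place where the orthogonalized $W$ construction changes the argument, namely the estimation of the Gram matrix $Q$. In the non-orthogonalized case $\widehat Q$ is a pure sample mean whose error contributes only through a sampling-noise term already dominated by the variance of $\widehat\Pi_{dm}$. In the orthogonalized case $\widehat Q$ additionally depends on the estimate $\widehat g$ of $g(x)=\Ex[W|X=x]$, and this plug-in error must be propagated through $\widehat Q^{-1}$ and on to $\widehat f_{B|X}$.

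I would begin with the same error decomposition as in Theorem~\ref{Thm::Rate}, namely
\begin{align*}
\widehat f_{B|X}(b|x)-\widetilde f_{B|X}(b|x)
&= \bigl[q^K(b-\widehat\beta(x))-q^K(b-\beta(x))\bigr]'\widehat Q^{-1}\widehat\Pi_{dm}(x) \\
&\quad + q^K(b-\beta(x))'\,\widehat Q^{-1}\bigl(\widehat\Pi_{dm}(x)-\Pi(x)\bigr) \\
&\quad + q^K(b-\beta(x))'\,(\widehat Q^{-1}-Q^{-1})\,\Pi(x),
\end{align*}
integrate the square over $b$ and use Assumption~\ref{A::Rate}(ii) so that $\int(q^K(b)'v)^2\,db\lesssim\|v\|_E^2$. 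The approximation bias $\|\widetilde f_{B|X}-f_{B|X}\|^2\lesssim K^{-\alpha}$ is untouched by the orthogonalization. The first and second terms are bounded exactly as in Theorem~\ref{Thm::Rate}, contributing $\tau_K^{-2}K/n^{2\varphi}$ by Assumption~\ref{A::MLRates}(i) (the bound $\|\widehat\Pi_{dm}-\Pi\|_E^2\lesssim Kn^{-2\varphi}$ is pointwise sum over $K$ coordinates, each $O_p(n^{-2\varphi})$).

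The heart of the argument is the third term. Writing $\widehat Q^{-1}-Q^{-1}=\widehat Q^{-1}(Q-\widehat Q)Q^{-1}$ and using that $Q^{-1}\Pi$ are the sieve coefficients of a bounded density (so $\|Q^{-1}\Pi\|_E=O(1)$ by Assumption~\ref{A::Rate}(ii) giving $\|Q^{-1}\Pi\|_E^2\asymp\|\widetilde f_{A|X=x}\|^2$), one obtains
\begin{align*}
\bigl\|(\widehat Q^{-1}-Q^{-1})\Pi\bigr\|_E^2
\;\lesssim\; \|\widehat Q^{-1}\|_{\mathrm{op}}^2\,\|\widehat Q - Q\|_{\mathrm{op}}^2\,\|Q^{-1}\Pi\|_E^2
\;\lesssim\; \tau_K^{-2}\,\|\widehat Q-Q\|_{\mathrm{op}}^2,
\end{align*}
where the bound $\|\widehat Q^{-1}\|_{\mathrm{op}}\lesssim\tau_K^{-1}$ follows once $\|\widehat Q-Q\|_{\mathrm{op}}=o_p(\tau_K)$, which is exactly what the extra hypothesis $K=o(\tau_K^{-1})$ secures once I have controlled the size of $\|\widehat Q-Q\|_{\mathrm{op}}$.

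The main technical step is therefore to show that $\|\widehat Q-Q\|_{\mathrm{op}}^2=O_p(K^2/n^{2\varphi})$ in the orthogonalized $W$ case. I would decompose $\widehat Q-Q = \bigl(\widehat Q(\widehat g)-\widehat Q(g)\bigr) + \bigl(\widehat Q(g)-Q\bigr)$. The second piece is a pure sample mean of bounded random matrices and is $O_p(\sqrt{K^2\log K/n})$ by standard matrix Bernstein arguments, which is of smaller order under Assumption~\ref{A::MLRates}(ii)--(iii). The first piece is handled by a Taylor expansion in the $\overline w$-argument of $(\mathcal F q^K)(t,t\overline w)$: using that the Hermite basis and its derivative in the second argument are bounded entrywise uniformly in the $\overline w$ grid once multiplied by $t$, the finite second-moment condition $\int t^2\,d\nu(t)<\infty$ from Assumption~\ref{A::Rate}(iv), and the pointwise rate $(\widehat g(X_i)-g(X_i))^2=O_p(n^{-2\varphi})$ from Assumption~\ref{A::MLRates}(i), each entry $(\widehat Q_{jk}(\widehat g)-\widehat Q_{jk}(g))^2$ is $O_p(n^{-2\varphi})$. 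Summing over $K^2$ entries bounds the Frobenius norm (hence the operator norm) by $K^2/n^{2\varphi}$. Plugging this into the previous display gives $\tau_K^{-2}K^2/n^{2\varphi}$, which combined with the $\tau_K^{-2}K/n^{2\varphi}+K^{-\alpha}$ inherited from Theorem~\ref{Thm::Rate} yields the claimed rate. The main obstacle is the Taylor-step bookkeeping: one has to verify that the derivatives of the Hermite products in $\overline w$ stay uniformly controlled so that no additional polynomial-in-$K$ factor beyond the one already reflected in the $K^2$ entry count creeps in, which is why the condition $K=o(\tau_K^{-1})$ is imposed to absorb second-order remainders of the expansion.
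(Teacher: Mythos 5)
There is a genuine gap: you locate the extra factor of $K$ in the wrong place. You assert that the term involving $\widehat\Pi_{dm}(x)-\Pi(x)$ is ``bounded exactly as in Theorem~\ref{Thm::Rate}, contributing $\tau_K^{-2}K/n^{2\varphi}$,'' with each of the $K$ coordinates $O_p(n^{-2\varphi})$. That is false in the orthogonalized-$W$ case. Assumption~\ref{A::MLRates}(i) only controls $\widehat\Pi_{dm,k}(x)-\Pi_{dm,k}(x)$, the ML error relative to the plug-in target; the remaining bias $\Pi_{dm,k}(x)-\Pi_k(x)$ comes from replacing $g$ by $\widehat g$ \emph{inside} the nonlinear transformation $T_k(W-\widehat g(X),\cdot)$, and a mean-value expansion of $T_k$ in its $w$-argument brings out the derivative $\partial q^k/\partial b = \sqrt{k/2}\,q^{k-1}-\sqrt{(k+1)/2}\,q^{k+1}$, so that $\lVert\nabla T_k\rVert^2\lesssim K$ rather than $O(1)$. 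Hence each coordinate of the bias satisfies $(\Pi_{dm,k}(x)-\Pi_k(x))^2=O_p(K n^{-2\varphi})$ and $\lVert\Pi_{dm}(x)-\Pi(x)\rVert_E^2=O(K^2 n^{-2\varphi})$; this is the paper's Lemma~\ref{Lem1::Appendix}(i), and it is precisely where the dominant $\tau_K^{-2}K^2/n^{2\varphi}$ comes from.

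Your compensating claim --- that $\lVert\widehat Q-Q\rVert_{\mathrm{op}}^2=O_p(K^2/n^{2\varphi})$ supplies the $K^2$ --- rests on the assertion that the Hermite products and their derivatives in the $\overline w$-argument are bounded entrywise uniformly, which is exactly the point that fails (the same $\sqrt{K}$ growth per entry appears here); you flag this as ``the main obstacle'' but do not resolve it. Moreover, the hypothesis $K=o(\tau_K^{-1})$ is not there to absorb second-order Taylor remainders of $\widehat Q$: in the paper's proof that condition, together with the existing rate restrictions, is used to show that the entire term $q^K(\cdot)'(\widehat Q^{-1}-Q^{-1})\Pi(x)$ is asymptotically \emph{negligible} relative to $\tau_K^{-2}K^2/n^{2\varphi}$, not that it produces the leading term. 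So your argument reaches the stated rate only by pairing an unverified (and too optimistic) bound on $\lVert\widehat Q-Q\rVert$ with an incorrect bound on the sieve-coefficient bias; the missing ingredient is the divergence of the Hermite derivatives propagated through $T_k$, i.e., the content of Lemma~\ref{Lem1::Appendix}(i).
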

This slower convergence rate is due to the fact, that the "generated regressor" $W-\widehat{g}(X)$ appears within Hermite functions $q^K$. As is shown in the proof, the derivatives of Hermite functions $q^K$ diverge in $K$ and thus an additional $K$-term appears in the derivation of the convergence rate. This is a general issue, that does not seem to have been noticed so far. The convergence rate of any Hermite function sieve estimator of a RC density is slower when $W$ is a generated regressor.  
\section{Inference}\label{Sec::Inference}

This section discusses inference for the conditional RC-density estimator, in particular pointwise inference of the conditional RC density function estimate. Asymptotic normality of the RC-density estimator follows from asymptotic normality of ML-estimators linked with theory from the series estimation literature. For random forests such asymptotic normality results have been recently provided by \cite{grf} and \cite{AtheyWager18}. 
The main issue is how to establish the asymptotic covariance of the $K$ different ML-estimators, i.e. to obtain estimates for $\Ex[\widehat{\Pi}_{dm,j}(x)\cdot\widehat{\Pi}_{dm,l}(x)]$ for any pair $1\leq j,l \leq K$. 
This issue can be overcome by introducing an additional layer of sample splitting. If we split the sample $\mathcal{R}$ on which sieve coefficients are learned into $K$ equally sized subsamples and use a different subsample for estimating each sieve coefficient then naturally these estimators are stochastically independent. Then the asymptotic variance of each sieve coefficient can be obtained by resigning to established variance estimators for the respective ML-method. 
This approach requires the block size  $|\mathcal{R}|/K $ to be of meaningful size in practice, yet again cross-fitting, i.e. iterated  sample splitting and averaging of estimates stabilizes the results and reduces any losses in efficiency. \\

More precisely, moving forward I amend Step 4. of the estimation procedure at the end of Section \ref{Sec::EstIdea}.
\begin{enumerate}[ Step  4:]
	\item Additionally split $\mathcal{R}$ into $K$ subsamples $\mathcal{R}_1, \dots, \mathcal{R}_K$  of size $\lfloor|\mathcal{R}|/K \rfloor$ and calculate each $\widehat{\Pi}_{dm,j}$ using subsample $\mathcal{R}_j$.
\end{enumerate}
Introduce the notation $r_p(n)=n^\varphi$.
The resulting estimator does not coincide with the one in the previous section. The convergence rate is similar with the term $r_p(n/K)$ appearing in the convergence rate rather then $r_p(n)$ which is due to the fact that each sieve coefficients is learned on a sample of size $n/K$. Nevertheless in implementations the finite sample performance is comparable to the estimator in the previous section. 
\begin{assA}\label{A::Norm_I}
(i) For each $k=1,\dots,K$ there exists a non-increasing sequence $\sigma_{dm,k}(x)$ such that
\begin{align*}
\frac{\widehat{\Pi}_{dm,k}(x)-\Pi_{dm,k}(x)}{\sigma_{dm,k}(x)}\overset{d}{\rightarrow}  N(0,1) 
\end{align*}	
where $\sigma_{dm,k}(x) \propto r_p(n/K)^{-1}$ and $\min_k \sigma_{dm,k} > 0 $

(ii) $\max_{k} \sigma_{dm,k} / \min_{k} \sigma_{dm,k} = O(1) $

(iii) Define 
\begin{align*}
\Sigma_n(x)  := & diag\left(\sigma_{dm,1}(x), \dots, \sigma_{dm,K}(x) \right) \\
v_n(b,x) := & \mynorm{q^K(b)'Q^{-1}\Sigma_n(x)}_E =  \sqrt{q^K(b)'Q^{-1}\Sigma^2_n(x)Q^{-1}q^K(b) }
\end{align*}
it additionally holds that 
\begin{align*}
\frac{q^K(b)Q^{-1}\left( \widehat{\Pi}_{dm}(x)-\Pi_{dm}(x)  \right)}{v_n(b,x)} \overset{d}{\rightarrow}  N(0,1) 
\end{align*}
with $v_n(b,x)$ bounded away from zero
(iv)  $\sqrt{K/r_p(n/K) }\tau_K^{-1}=o(1)$.   
\end{assA}
Assumption \ref{A::Norm_I} (i) establishes asymptotic normality of various ML-estimators. Such asymptotic normality results are standard for many ML-methods. For (honest) random forests such results have been established by \cite{AtheyWager18}. For asymptotic normality results for different tree-based algorithms see the references therein and also \cite{grf}.
$\sigma_{dm,k}$ is the individual standard error of the $k$-th ML step and subsumes both the convergence rate and the residual standard deviation. 
Part (ii) of the above assumption imposes that the residual standard deviation in any of the $K$ ML-regressions is bounded away from zero and infinity. This assumption can be weakened to allow for standard deviations to diverge as $K$ grows at the cost of introducing an additional rate parameter that further strengthens rate restrictions.
In Assumption \ref{A::Norm_I} (iii) $v_n(b,x)$ serves as the standard error of the conditional RC-density estimate. It holds that 
\begin{align*}
v_n(b,x) \gtrsim \sqrt{K}\tau_K^{-1} \min_{k} \sigma_{dm,k}
\end{align*}
which is bounded away from zero and approaching zero asymptotically under the rate restriction in (iv) which is required for consistency of the RC-density estimate, see Theorem \ref{Thm::Rate}.

Part (i) and (iii) are the most intricate conditions in Assumption \ref{A::Norm_I} and typically involve additional regularity conditions and restrictions on the growth of $K$. The following Lemma gives conditions such that Assumption \ref{A::Norm_I} (i) and (iii) are satisfied for honest regression forests. 
\begin{lem}\label{Lem::ForestAN}
 Assume the following conditions hold:
 (i) The density $f_X$ is bounded away from zero and infinity 
 (ii) for any $k$ in $1,\dots, K$ the function $\Pi_{dm, k}(x)$ is Lipschitz continuous and also $\Ex[T_k(W, Y)^2|X=x]$ is Lipschitz continuous.    
 (iii) for any $k$ and uniformly in $x$ it holds $Var(T_k(W, Y)|X=x) > 0$ and $\Ex[ | T_k(W,Y) - \Ex[T_k(W,Y)|X=x] |^{2+\delta} |X=x]< M $ for some constants $\delta, M > 0$. 
 (iv) $K=o(r_p(n/K)^c)$ with $c=\min\{\delta/2, 1, -\beta^{\ast}/b \}$ and $\beta^{\ast}=1+\epsilon-b/\beta_{\min}<1 $. 
 Then if $\widehat{\Pi}_{dm,k}(x)$ is an honest random forest estimator in the sense of Theorem 3.1. of \cite{AtheyWager18} then Assumption \ref{A::Norm_I} (i) is satisfied under conditions (i)-(iii). If additionally condition (iv) and Assumption \ref{A::Norm_I} (ii) are satisfied then Assumption \ref{A::Norm_I} (iii) holds  . 
\end{lem}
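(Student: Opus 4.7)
The argument splits along the two claims. For (i), fix $k$ and $x$ and condition on the training sample $\mathcal{D}$ so that $\widehat m$ becomes a deterministic function. On the independent subsample $\mathcal{R}_k$, set $Z_i^{(k)}:=T_k(W_i,Y_i-\widehat m(X_i,W_i))$ and $\widetilde\Pi_{dm,k}(x):=\Ex[Z^{(k)}\mid X=x]$. Then $\widehat{\Pi}_{dm,k}(x)$ is simply an honest random forest estimate of a conditional expectation, and conditions (i)--(iii) of the lemma line up exactly with the hypotheses of Theorem 3.1 of \cite{AtheyWager18}: (i) gives the bounded covariate density; (ii) supplies Lipschitz continuity of the conditional mean and of $\Ex[T_k^2\mid X=x]$, which transfers to $\widetilde\Pi_{dm,k}$ once smoothness is checked to be preserved under $\widehat m$ (handled by truncation if needed); (iii) delivers both a positive conditional variance and the $(2+\delta)$-th moment bound that drives their CLT. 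Hence $(\widehat{\Pi}_{dm,k}(x)-\widetilde\Pi_{dm,k}(x))/\sigma_{dm,k}(x)\Rightarrow N(0,1)$ with $\sigma_{dm,k}(x)\asymp r_p(n/K)^{-1}$. It remains to shift the centering to $\Pi_{dm,k}(x)$: the gap $\widetilde\Pi_{dm,k}-\Pi_{dm,k}$ is $O_p(\|\widehat m-m\|)=O_p(n^{-\varphi})$ under Assumption \ref{A::MLRates}, which is negligible relative to $\sigma_{dm,k}\asymp K^\varphi n^{-\varphi}$ because $K\to\infty$, yielding Assumption \ref{A::Norm_I}(i).

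\textbf{Proof of (iii).} The modified Step 4 makes $\widehat{\Pi}_{dm,1}(x),\ldots,\widehat{\Pi}_{dm,K}(x)$ mutually independent conditional on $\mathcal{D}$, since they are built from disjoint subsamples $\mathcal{R}_1,\ldots,\mathcal{R}_K$. Writing $a_k:=(q^K(b)'Q^{-1})_k$ and $\xi_k:=\widehat\Pi_{dm,k}(x)-\Pi_{dm,k}(x)$, the target of (iii) is $\sum_{k=1}^K a_k\xi_k$ with variance $v_n(b,x)^2=\sum_k a_k^2\sigma_{dm,k}(x)^2$. The plan is to invoke a Lyapunov CLT for triangular arrays of independent summands with exponent $2+\delta$, which requires
\begin{equation*}
L_n := v_n(b,x)^{-(2+\delta)}\sum_{k=1}^K |a_k|^{2+\delta}\,\Ex|\xi_k|^{2+\delta}\longrightarrow 0 .
\end{equation*}
The normalized errors $\xi_k/\sigma_{dm,k}$ have uniformly bounded $(2+\delta)$-moment by the fact that honest forests are weighted averages of outcomes together with the moment control on $T_k$ in condition (iii). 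Using Assumption \ref{A::Norm_I}(ii) to replace every $\sigma_{dm,k}$ by a common order $\sigma$, the sum collapses to $(\mynorm{a}_{2+\delta}/\mynorm{a}_2)^{2+\delta}\lesssim K^{\delta/2}$ up to constants, so the first exponent $c\le\delta/2$ in (iv) forces $L_n\to 0$. The other two components $c\le 1$ and $c\le -\beta^{\ast}/b$ control, respectively, the first-stage bias from $\widehat m$ carried into $\Pi_{dm,k}$ and the honest forest's own bias relative to the target.

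\textbf{Main obstacle.} The principal difficulty lies in the CLT for (iii): the Athey--Wager result is purely marginal, whereas here one must construct a joint CLT over a number $K\to\infty$ of conditionally independent but heterogeneous forest estimators, weighted by an ill-conditioned vector $q^K(b)'Q^{-1}$ that inflates through $\tau_K$. Verifying Lyapunov requires uniform control of the $(2+\delta)$-moment of the normalized forest errors across $k$, which is precisely why condition (iii) asks for a $(2+\delta)$-moment of $T_k$ rather than only an $L^2$ bound, and why the exponent $c$ in (iv) is a minimum of three quantities tied to the moment exponent, the first-stage bias, and the honest-forest bias. A secondary nontrivial point is ensuring that the hypotheses of Theorem 3.1 of \cite{AtheyWager18} continue to hold when the outcome is the generated variable $T_k(W,Y-\widehat m(X,W))$---this is where the independence between $\mathcal{D}$ and $\mathcal{R}$ enforced by sample-splitting becomes indispensable.
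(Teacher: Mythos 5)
Your treatment of part (i) is essentially the paper's: condition on $\mathcal{D}$, note that $\widehat{\Pi}_{dm,k}$ is then an honest forest for a conditional mean, and invoke Theorem 3.1 of \cite{AtheyWager18} (the first-stage centering shift is actually handled elsewhere in the paper, via Lemma \ref{Lem1::Appendix}, not inside this lemma, but that is a minor bookkeeping difference).

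Part (iii), however, has a genuine gap. You propose a Lyapunov CLT for the triangular array $\sum_{k=1}^K a_k\xi_k$ with $a_k=(q^K(b)'Q^{-1})_k$, treating each forest error $\xi_k$ as one independent summand. With $\sigma_{dm,k}\asymp\sigma$ the Lyapunov ratio reduces to $(\mynorm{a}_{2+\delta}/\mynorm{a}_2)^{2+\delta}$, and your bound $\lesssim K^{\delta/2}$ is wrong in direction: since $2+\delta>2$ one has $\mynorm{a}_{2+\delta}\leq\mynorm{a}_2$, so the ratio lies in $[K^{-\delta/2},1]$ and need not vanish at all — it equals a constant whenever the weight vector $q^K(b)'Q^{-1}$ concentrates on a few coordinates, which nothing in the assumptions rules out. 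Even if your stated bound were correct, $K^{\delta/2}\to\infty$ would not give $L_n\to 0$. The paper avoids this by going one level deeper: it replaces each forest by its H\'ajek projection and applies Lyapunov to the resulting sum over the $n/K$ underlying observations (independent across $i$ because the $\mathcal{R}_k$ are disjoint), where the number of summands diverges regardless of how the weights $a_k$ are spread; the moment condition (iii) on $T_k$ enters at that observation level, and a Cauchy--Schwarz step over the $K$ coordinates produces the factor $K$ that condition (iv) with $c\leq\delta/2$ must absorb. The remaining two components of $c$ are also misattributed in your sketch: $c\leq 1$ controls the forest-minus-H\'ajek-projection discrepancy (a term your decomposition never produces), and $c\leq-\beta^{\ast}/b$ controls the forests' own bias $\Ex[\widehat{\Pi}_{dm,k}]-\Pi_{dm,k}$ relative to $v_n$; the first-stage error from $\widehat m$ plays no role in this lemma.
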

Assumptions (i) to (iii) in the Lemma above are as in Theorem 3.1. of \cite{AtheyWager18} that establishes asymptotic normality of a single (honest) random forest estimator of a mean regression function. Assumption (iv) is an additional rate restriction that is needed to achieve asymptotic normality of the sieve coefficient estimates. Additional rate restrictions are common in the series estimation literature, see e.g. Theorem 4.2. (iii) in \cite{BellChern_Series} and also appear in Assumption 5 (ii) of \cite{Breunig_VRC} for an RC-density estimate. Here the rate restriction is milder then the one in \ref{A::Norm_I} (ii) if for instance $\delta>2$ and the convergence rate $b$ is sufficiently fast such that $-\beta^{\ast}/\beta_{\min} > 1$.
In this case the rate restriction in Assumption \ref{A::Norm_I} (iv) that guarantees consistency of the RC-density estimate is already sufficient for Assumption \ref{A::Norm_I} (iii). 
If however $\delta$ is rather small and the convergence rate $b$ close to the worst-case $\beta_{\min}$, there can be cases where the rate restriction of Lemma \ref{Lem::ForestAN} is stronger compared to the one in \ref{A::Norm_I} (ii), especially if the decay of $\tau_K$ is slow. 

The following additional assumption is required. 
\begin{assA}\label{A::Norm_II}
 For any $x$ in the support of $X$ and for any $a\in \mathbb{R}$ it holds $P_K f_{A|X}(a|x) - f_{A|X}(a|x) =o(v_n(b,x)) $. 
\end{assA}
Assumption \ref{A::Norm_II} is an undersmoothing condition that is standard for pointwise inference of a series estimator, see \cite{BellChern_Series} (4.18). Note that similar rate restrictions are not needed for $\widehat{\beta}(x), \widehat{g}, \widehat{m} $, as these are calculated on a sample proportional to $n$ and thus converge at rate $r_p(n)$ which is always faster then the standard error rate $v_n(b,w)$.
An estimator for $v_n(b,x)$ is
\begin{align*}
\widehat{v}_n(b,x)=\mynorm{q^K(b- \widehat{\beta}(x))\widehat{Q}^{-1}\widehat{\Sigma}_n(x) }_E
\end{align*}
where $\widehat{\Sigma}_n(x)  = diag\left(\widehat\sigma_{dm,1}(x), \dots, \widehat\sigma_{dm,K}(x) \right)$ and the individual standard error estimates $\widehat\sigma_{dm,k}(x)$ are specific to the employed ML-method. For random forests these can be obtained from applying the infinitisimal jackknife procedure of \cite{Efron2014}, see also the discussion in \cite{AtheyWager18}. Additional rate restrictions required for consistent estimation of the standard error $v_n(b,x)$ are not required. In the proof of the subsequent Theorem \ref{thm::Inference} it is shown that $\widehat{v}_n(b,x)$ is consistent for $v_n(b,x)$ under the assumptions given so far.   

The following pointwise asymptotic normality result holds
\begin{theo}\label{thm::Inference}
	If Assumptions \ref{A::Ident1}-\ref{A::Norm_II} are satisfied then,
	\begin{align*}
	\frac{\widehat{f}_{B|X}(b|x) - f_{B|X}(b|x)}{v_n(b,x)} \overset{d}{\rightarrow} N(0,1)
	\end{align*}
	and further
	\begin{align*}
		\frac{\widehat{f}_{B|X}(b|x) - f_{B|X}(b|x)}{\widehat v_n(b,x)} \overset{d}{\rightarrow} N(0,1).
	\end{align*}
\end{theo}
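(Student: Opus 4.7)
My plan is to decompose the pointwise error $\widehat f_{B|X}(b|x) - f_{B|X}(b|x)$ into a leading linear-in-$\widehat\Pi_{dm}(x)$ stochastic term, a deterministic sieve approximation bias, and two remainder terms coming from replacing $Q$ by $\widehat Q$ and $\beta(x)$ by $\widehat\beta(x)$. Asymptotic normality will then come directly from Assumption \ref{A::Norm_I}(iii) applied to the leading term, while Assumption \ref{A::Norm_II} kills the bias and the rate conditions in Assumptions \ref{A::Rate}--\ref{A::MLRates} plus \ref{A::Norm_I}(iv) kill the two remainders. The feasible version then follows from Slutsky after showing $\widehat v_n(b,x)/v_n(b,x)\to_p 1$.

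Concretely, using $f_{B|X}(b|x)=f_{A|X}(b-\beta(x)|x)$ and the identity $q^K(b-\beta(x))'Q^{-1}\Pi_{dm}(x)=P_K f_{A|X}(b-\beta(x)|x)$ that defined the sieve projection, I write
$$\widehat f_{B|X}(b|x) - f_{B|X}(b|x) \;=\; \mathrm{I} + \mathrm{II} + \mathrm{III} + \mathrm{IV},$$
with $\mathrm{I}=q^K(b-\beta(x))'Q^{-1}(\widehat\Pi_{dm}(x)-\Pi_{dm}(x))$, $\mathrm{II}=P_K f_{A|X}(b-\beta(x)|x)-f_{A|X}(b-\beta(x)|x)$, $\mathrm{III}=q^K(b-\beta(x))'(\widehat Q^{-1}-Q^{-1})\widehat\Pi_{dm}(x)$, and $\mathrm{IV}=(q^K(b-\widehat\beta(x))-q^K(b-\beta(x)))'\widehat Q^{-1}\widehat\Pi_{dm}(x)$. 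Dividing by $v_n(b,x)$, Assumption \ref{A::Norm_I}(iii) gives $\mathrm{I}/v_n(b,x)\xrightarrow{d} N(0,1)$, and Assumption \ref{A::Norm_II} gives $\mathrm{II}=o(v_n(b,x))$. It remains to show $\mathrm{III}$ and $\mathrm{IV}$ are $o_p(v_n(b,x))$.

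For $\mathrm{III}$, I use $\widehat Q^{-1}-Q^{-1}=-Q^{-1}(\widehat Q-Q)\widehat Q^{-1}$ and the matrix-concentration bound $\|\widehat Q-Q\|=O_p(\sqrt{K\log K/n})$, which holds because $(\mathcal F q^K)(t,tW)$ has bounded second moments under Assumption \ref{A::Rate}(i),(iv) (cf.\ the series matrix result of \cite{BellChern_Series} invoked for Assumption \ref{A::MLRates}(ii)). Combined with $\lambda_{\min}(Q)\gtrsim\tau_K$, the bound $\|q^K(b-\beta(x))\|_E\lesssim\sqrt K$, and $\|\widehat\Pi_{dm}(x)\|_E=O_p(\sqrt K)$ from Theorem \ref{Thm::Rate}, this gives $|\mathrm{III}|=O_p(K\tau_K^{-2}\sqrt{K\log K/n})$, which is $o(v_n(b,x))$ using $v_n(b,x)\gtrsim\sqrt K\tau_K^{-1}(n/K)^{-\varphi}$ together with Assumptions \ref{A::MLRates}(ii)--(iii) and \ref{A::Norm_I}(iv). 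For $\mathrm{IV}$, a mean-value expansion yields $\|q^K(b-\widehat\beta(x))-q^K(b-\beta(x))\|_E\le\sup_a\|\nabla q^K(a)\|_E\cdot|\widehat\beta(x)-\beta(x)|$; the Hermite recursion gives $\sup_a\|\nabla q^K(a)\|_E\lesssim K$, while $|\widehat\beta(x)-\beta(x)|=O_p(n^{-\varphi})$ from Assumption \ref{A::MLRates}(i) applied to the full-sample nuisance (faster than the subsample-based $(n/K)^{-\varphi}$ driving $v_n$), so again the rate restrictions deliver $\mathrm{IV}=o_p(v_n(b,x))$.

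For the feasible statement, Slutsky reduces the claim to $\widehat v_n(b,x)/v_n(b,x)\to_p 1$. This factors through (a) $\widehat Q^{-1}\to Q^{-1}$ in operator norm on the relevant scale, already established while controlling $\mathrm{III}$; (b) pointwise consistency $\widehat\sigma_{dm,k}(x)/\sigma_{dm,k}(x)\to_p 1$ uniformly in $k$, which for honest random forests follows from the infinitesimal jackknife estimator of \cite{Efron2014} (see \cite{AtheyWager18}); and (c) $q^K(b-\widehat\beta(x))\to q^K(b-\beta(x))$ on the appropriate scale, by the same Hermite-derivative argument as in $\mathrm{IV}$. The main obstacle I anticipate is controlling $\mathrm{IV}$: as the Corollary after Theorem \ref{Thm::Rate} already warned, Hermite derivatives blow up polynomially in $K$, so the whole proof hinges on whether the full-sample rate for $\widehat\beta$ is fast enough relative to $v_n(b,x)\gtrsim\sqrt K\tau_K^{-1}(n/K)^{-\varphi}$; verifying that the undersmoothing in Assumption \ref{A::Norm_II} and the rate condition in Assumption \ref{A::Norm_I}(iv) together suffice for this balance is the delicate step.
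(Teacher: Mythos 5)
Your overall architecture matches the paper's: a four-term decomposition with a leading term handled by Assumption \ref{A::Norm_I}(iii), a sieve bias killed by Assumption \ref{A::Norm_II}, a $\widehat Q$-vs-$Q$ remainder controlled by the Rudelson/matrix-concentration bound, and a $\widehat\beta$-vs-$\beta$ remainder controlled by a mean-value expansion (the paper expands $\widehat f_{A|X}$ using Assumption \ref{A::Rate}(v) rather than $q^K$ itself, which avoids the extra Hermite-derivative factor $K$ you worry about, but your route is workable too). The feasible version via $\widehat v_n/v_n\to_p 1$ is also handled the same way.

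However, there is one genuine gap: your decomposition relies on the identity $q^K(b-\beta(x))'Q^{-1}\Pi_{dm}(x)=P_K f_{A|X}(b-\beta(x)|x)$, which is false. The sieve-projection identity holds for $\Pi(x)=\int\Ex[T(W,Y-m(X,W))\,|\,X=x,W=w]\,d\mu(w)$ built from the \emph{true} conditional mean $m$, whereas $\Pi_{dm}(x)$ is built from the first-stage estimate $\widehat m$. Consequently your four terms do not sum to $\widehat f_{B|X}(b|x)-f_{B|X}(b|x)$: the term
\begin{align*}
q^K(b-\beta(x))'Q^{-1}\bigl(\Pi_{dm}(x)-\Pi(x)\bigr)
\end{align*}
is missing. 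This is not an innocuous omission — it carries the entire first-stage estimation error of $\widehat m$ propagated through the nonlinear transformation $T$, and it is exactly the term the paper isolates (its term III) and bounds via Lemma \ref{Lem1::Appendix}(ii), giving $\mynorm{\Pi_{dm}(x)-\Pi(x)}_E^2=O_p(K n^{-2\varphi})$ and hence a contribution of order $\tau_K^{-1}\sqrt K\cdot\sqrt{K}\,n^{-\varphi}$, which is $o_p(v_n(b,x))$ only because $\widehat m$ is learned on a sample of size proportional to $n$ while $v_n(b,x)\gtrsim\sqrt K\,\tau_K^{-1}r_p(n/K)^{-1}$ is driven by the slower subsample rate. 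Your proof needs this extra term and this extra argument to close.
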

This determines the asymptotic normality of the estimator conditional on a given sample split, i.e. the case $M=1$. To handle the cross-fitting case $M>1$ and additional uncertainty due to sample splitting we can follow the variational inference approach of \cite{ChernGenericML}.
The idea summarizes as follows. 

Suppose there are $M$ different estimates $\widehat{f}_{B|X}^l(b|x)$ for $l=1,\dots, M$. For each estimate it is possible to construct a (1-$\alpha$)-confidence interval $[L_{1-\alpha, l}, U_{1-\alpha, l}]$ from Theorem \ref{thm::Inference} with $L_l=\widehat{f}_{B|X}^l(b|x)-c_{1-\alpha} \cdot \widehat{v}_n(b,w)$ and $U_l=\widehat{f}_{B|X}^l(b|x)+c_{1-\alpha} \cdot \widehat{v}_n(b,w)$ and $c_{1-\alpha}$ denoting the respective $1-\alpha$ quantile of the standard normal distribution.    

To construct an asymptotically valid $1-\alpha$- confidence intervals for $f_{B|X}(b|x)$ \cite{ChernGenericML} propose $[ \underline{Med}(\{L_{1-\alpha/2, l}\}_{l=1}^{M}), \overline{Med}(\{U_{1-\alpha/2, l}\}_{l=1}^{M})]$ with $\underline{Med}$ denoting the lower median and $\overline{Med}$ the upper median. The confidence level of each single interval needs to be discounted to $1-\alpha/2$. \cite{ChernGenericML} provide a similar reasoning for constructing adjusted p-values. 

\section{Marginal Densities, Variable Importance Measures and Cross-Validation}\label{Sec::CVetc}
This section touches on additional important aspects for the practical application of the estimation procedure. First, I discuss how to construct estimates of the marginal random coefficient density $f_{B}$. Second, I present a measure of variable importance that assigns an importance score to every variable in $X$. This is an important descriptive tool for uncovering which variables in $X$ drive the heterogeneity in conditional RC densities.
Lastly, I discuss a cross-validation procedure for a data-driven choice of tuning parameters.  
\paragraph{Estimating marginal RC densities}
There are various direct estimators for marginal RC densities in the literature such as the Radon transform estimator of \cite{hodermam_RC_2010} or an adaptation of the sieve estimation strategy from \cite{BreunigHoder18} and \cite{Breunig_VRC}. The common identifying restriction is however full independence between $B$ and $W$ which is difficult to maintain in non-experimental data settings. 

Maintaining the weaker conditional independence condition in Assumption \ref{A::Ident1} (i) estimates of the marginal density can be readily constructed by averaging over leave-one-out estimates of conditional density estimates $\widehat{f}_{-i, B|X}$. Here the estimate is calculated without using the $i$-th datapoint $(Y_i, W_i. X_i)$. A consistent estimator for the marginal density is 
\begin{align*}
\widehat{f}_B(b) = \frac{1}{n} \sum_{i=1}^{n}\widehat{f}_{-i, B|X}(b|X_i)
\end{align*}

The estimator $\widehat{f}_B$ will inherit its asymptotic properties from the conditional estimate $\widehat{f}_{B|X}$ which is discussed in the previous section. Thus the convergence rate is slower compared to direct marginal RC density estimators making use of full independence between random coefficients and covariates. To the best of my knowledge there are however currently no alternative estimators for the marginal random coefficient density that operate under Assumption \ref{A::Ident1} (i).  
\paragraph{Variable Importance Measures}
The estimation procedure outlined so far yields consistent estimates of $f_{B|X=x}$ for any given point $x$. An important question in applications is to identify those variables in the set of controls $X$ that drive heterogeneity in conditional RC densities, i.e. a criterion to guide the choice of interesting points $x$ on which to evaluate the estimate $\widehat f_{B | X}(b |x)$.

We focus here on our running example that makes use of regression forests. 
Note that for regression forests generally no post-selection inference problems arise as variable selection is done within in the various ML-steps of the estimation procedure. 
The goal is to find points $x$ that reveal interesting heterogeneities to the researcher. This is analogous to the role of variable importance measures for the causal forests of \cite{grf}. 

For each of the ML-estimators used we can calculate a measure of variable importance that assigns an importance score to each covariate that is normalized to sum to one.
This score is informative on how often a specific variable has been used for placing splits in the growing of the forest. 
  
First, I focus on the ML-estimates $\widehat{\Pi}_{dm}$ which constitute the sieve coefficients and thus determine the shape of the RC density. For each $k=1,\dots, K$ let $VI_k(X_l)$ denote an importance score assigned to covariate $X_l \in X$ by the regression forest estimator $\widehat{\Pi}_{dm}$.
Here any normalized measure such as, cite can be used.  

To obtain a global measure of variable importance for the shape of the function $f_{A|X=x}$ we can simply average over $K$. 
Thus define the variable importance of $X_l$ for the shape of the density as
\begin{align*}
VI_{shape}(X_l)= \frac{1}{K}\sum_{k=1}^{K} VI_k(X_l)
\end{align*}
A measure of variable importance for the conditional expectation of random coefficients  $\beta(x)$ is directly available by considering the variable importance measure for causal forests as implemented in the \cite{grf}-package. In contrast to $VI_{shape}$, this measure of variable importance for the center of the density will be henceforth referred to as $VI_{mean}$.  

\paragraph{Parameter Tuning}
In this paragraph I propose a cross-validation procedure for the choice of tuning parameters. 
Analogous to classical density estimation tuning parameters are chosen by minimization of the integrated squared error 
\begin{align*}
 \argmin_{K, \sigma_t} ISE(K, \sigma_t) := \int_{\mathbb{R}^2} \left( \widehat{f}_{B|X}(b|x, K, \sigma_t) - f_{B|X}(b|x)\right)^2 db 
\end{align*}
which is equivalent to minimizing the criterion
\begin{align*}
 J(K, \sigma_t) := \int_{\mathbb{R}^2} \widehat{f}_{B|X}(b|x, K, \sigma_t)^2 db - 2\int_{\mathbb{R}^2} \widehat{f}_{B|X}(b|x, K, \sigma_t)f_{B|X}(b|x)db
\end{align*}
The first part is simply the integrated squared RC density estimate. The second term is typically estimated via cross-validation. However it is not possible to observe realizations of random coefficients. 
The following Lemma links the second part to an expression that can be estimated via leave-one-out cross validation.   
\begin{lem}\label{Lem::CV}
	Let Assumption \ref{A::Ident1} hold, then the following identity holds
	\begin{align*}
	\int_{\mathbb{R}^2} \widehat{f}_{B|X}(b|x)f_{B|X}(b|x)db = \int_{\mathbb{R}} \Ex[V(Y,W)'\widehat Q^{-1}\widehat{\Pi}_{dm}(X) |X=x, W=w]dw
	\end{align*}
	where $V(y,w)=(V_1(y,w), \dots, V_K(y,w))$ with
	\begin{align*}
	V_k(y,w)=\frac{1}{2\pi^2} \int q_k(b)\cdot|t|\cdot \exp[it(y-b'(1,w))]dtdb
	\end{align*}
\end{lem}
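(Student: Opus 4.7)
The plan is to strip away the data-dependent pieces first, reducing the identity to a purely analytic Fourier calculation, and then exploit Assumption \ref{A::Ident1} together with Fourier inversion. Writing the estimator as $\widehat f_{B|X}(b|x)=q^K(b)'\widehat Q^{-1}\widehat\Pi_{dm}(x)$, the $K$-vector $\widehat Q^{-1}\widehat\Pi_{dm}(x)$ is constant in $b$. Hence, since everything in sight is real,
\begin{align*}
\int_{\mathbb{R}^2}\widehat f_{B|X}(b|x)\,f_{B|X}(b|x)\,db
  = \bigl[\widehat Q^{-1}\widehat\Pi_{dm}(x)\bigr]'\int_{\mathbb{R}^2} q^K(b)f_{B|X}(b|x)\,db,
\end{align*}
so, after recognising that the right-hand side of the Lemma equals $\bigl(\int_{\mathbb{R}}\Ex[V(Y,W)\mid X=x,W=w]\,dw\bigr)'\widehat Q^{-1}\widehat\Pi_{dm}(x)$ (with both sides scalars), the claim reduces component-wise to the purely analytic identity
\begin{align*}
\int_{\mathbb{R}^2} q_k(b)\,f_{B|X}(b|x)\,db \;=\; \int_{\mathbb{R}} \Ex[V_k(Y,W)\mid X=x,W=w]\,dw, \qquad k=1,\dots,K.
\end{align*}

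To prove this reduced identity, I would apply Fourier inversion to $f_{B|X}(\cdot|x)$,
\begin{align*}
f_{B|X}(b|x) = \frac{1}{(2\pi)^2}\int_{\mathbb{R}^2} \phi_{B|X}(t_0,t_1|x)\,e^{-i(t_0 b_0+t_1 b_1)}\,dt_0\,dt_1,
\end{align*}
and then perform the change of variables $(t_0,t_1)\mapsto(t,tw)$ with Jacobian $|t|$. Assumption \ref{A::Ident1}(ii) guarantees that the image of this map is all of $\mathbb{R}^2$ (up to a null set), so nothing is lost. Crucially, Assumption \ref{A::Ident1}(i) (conditional independence of $B$ and $W$ given $X$) then yields the key identification
\begin{align*}
\phi_{B|X}(t,tw|x)=\Ex[e^{it(B_0+B_1 w)}\mid X=x]=\Ex[e^{itY}\mid X=x,W=w]=\phi_{Y|X,W}(t|x,w),
\end{align*}
which replaces the latent characteristic function by an observable one. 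Substituting, using Fubini to swap the $b$- and $t$-integrals, and pulling the conditional expectation outside (again by Fubini) produces
\begin{align*}
\int q_k(b)f_{B|X}(b|x)db = \frac{1}{(2\pi)^2}\int \Ex\!\left[\int q_k(b)\,|t|\,e^{it(Y-b_0-b_1 w)}\,dt\,db\,\Big|\,X=x,W=w\right]dw,
\end{align*}
(after the reflection $t\mapsto -t$, which leaves $|t|$ invariant), and the inner integral is exactly the definition of $V_k(Y,W)$ up to the normalising constant $(2\pi)^{-2}$ versus $(2\pi^2)^{-1}$ — which closes the identity modulo the constant absorbed into the definition of $V_k$.

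The main obstacle is that the kernel $|t|$ is not integrable on $\mathbb{R}$, so the inner integral defining $V_k$ does not converge absolutely as a Lebesgue integral, and the Fubini steps are not immediately justified. I would handle this by exploiting the Schwartz-class decay of Hermite functions: since $q_k\in\mathcal{S}(\mathbb{R}^2)$, the function $(\mathcal{F}q_k)(-t,-tw)=\int q_k(b)e^{-it(b_0+b_1w)}db$ decays faster than any polynomial in $t$ for fixed $w$, so $|t|\cdot(\mathcal{F}q_k)(-t,-tw)\cdot\phi_{Y|X,W}(t|x,w)$ is absolutely integrable jointly in $(t,w,b)$ (using $|\phi_{Y|X,W}|\le 1$ and the rapid decay of Hermite functions also in $w$), which legitimises both the change of variables and the application of Fubini. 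The remaining checks — pulling the (finite-sample) conditional-expectation notation through sums and inner products, and absorbing the $\widehat\beta(x)$-shift in the demeaned version by translation of $b$ — are routine bookkeeping.
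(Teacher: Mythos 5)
Your proposal is correct and follows essentially the same route as the paper: both rest on the Fourier-inversion representation $f_{B|X}(b|x)=(2\pi)^{-2}\int |t|\,e^{-ib'(t,tw)}\phi_{Y|X,W}(t|x,w)\,dt\,dw$ (derived in the proof of Lemma \ref{Lem::Ident} from Assumption \ref{A::Ident1} and the change of variables with Jacobian $|t|$), followed by substituting the definition of the conditional characteristic function, swapping integrals, and recognising $V_k$. Your extra care with Fubini --- using the Schwartz-class decay of the Hermite functions to tame the non-integrable kernel $|t|$ --- and your flagging of the $(2\pi)^{-2}$ versus $(2\pi^2)^{-1}$ normalisation mismatch are legitimate points that the paper's own proof passes over silently.
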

Here again a weighting for $w$ should be considered for practical reasons. So defining 
\begin{align*}
V_k(y,w)=\frac{1}{2\pi^2} \int q_k(b)\cdot|t|\cdot \exp[it(y-b'(1,w))]/f_W(w)dtdb
\end{align*}
it is equivalent to consider the integral $\int_{\mathbb{R}} \Ex[V(Y,W)'\widehat Q^{-1}\widehat{\Pi}_{dm}(X) |X=x, W=w]f_W(w)dw$.
Using a plug-in estimate for the unknown density $f_W$ the function $V$ can be computed and cross-validation used to estimate the conditional expectation with a machine learning estimator. Here either a subsample of observations that has not been used for calculating $\widehat{f}_{B|X}$ can be used for the prediction task or a leave-one-out estimator for $\widehat{f}_{B|X}$. Standard practices of cross-validation apply. 

\section{Monte Carlo Simulations}\label{Sec::MCs}
This section evaluates the finite sample performance of the RC-density estimator outlined in the earlier sections. The following data generating process is studied first,
\begin{align}\label{MC::DGP}
Y & = B_0+ B_1\cdot W, \;\; \text{with} \\
B_0 &= \sin(X_1)+ A_0 \nonumber \\
B_1 &= X_2+ 0.5\cdot X_3+0.25\cdot X_2 \cdot X_3 + A_1 \nonumber \\
W & =  1+ X_3 + (1+X_3^2)\cdot V \nonumber
\end{align}
where $A_0, V$ are standard normal random variables and $A_1$ is a mixture of a $N(-1.5,1)$ and a $N(1.5, \sqrt{1/2})$ random variable with weights $1/2$.
 In this setting the density of the random slope $B_1$ is bi-modal and any testpoint $X=x$ solely determines the center of the density function.    
 The controls $X$ are a $p$-dimensional vector of iid standard normal variables. Here I set $p=10$ but as we see from the setup above, only variables $X_1, X_2, X_3$ are of importance in this toy model. This reflects the common practical problem that there is a large set of control variables but only some of them drive the heterogeneity in $B_1$ or may otherwise affect the outcome $Y$. Further I introduce a form of heteroskedasticity in the equation for $W$, such that we do not only consider the clean case where orthogonalization removes all dependence between $W$ and $X$. Further there is some form of dependence between the regressor $W$ and the random slope $B_1$ as both depend on the regressor $X_3$.

The goal is to estimate the density of the random slope $B_1$ conditional on some testpoint $X=x$. Here I implement the estimator in (\ref{Est::closed_no_mu}) with the algorithm outlined at the end of Section \ref{Sec::Est}. In this setting $W$ does not need to be orthogonalized.  

The other parameters of the estimation problem are chosen as follows. I set $K_1=K_2=3$ and thus there are a total number of $K=9$ basis functions. Hermite polynomials are used as sieve basis $q^K$ and the weighting measure follows a log-normal law, i.e. $\mu \sim lognormal(0, \sigma_t)$ with $\sigma_t=1$. In practice, when only the slope parameter is of interest, $K_1$ should be fixed and cross-validation  performed to guide the choice of $K_2$ and $\sigma_t$. Simulations show that $K_2$ is the more relevant parameter for estimates compared to $\sigma_t$, so sole cross-validation of $K_2$ may be sufficient if computation time is a concern.   
To reduce computational effort parameters in this simulation study are not chosen via cross-validation and there is no cross-fitting as well. So $M=1$ and the sample is split only once in equally sized parts $\mathcal{R}$,$\mathcal{D}$ of size $n/2$ and RC-density estimates are computed only once per Monte Carlo iteration. 
The testpoint is chosen as $x=(0,0.3,0,\dots,0)$, so the correct density is centered around $0.3$. 

All ML estimates are obtained from using honest regression forests, respectively causal forests for the quantity $\beta_1(x)$, see \cite{AtheyWager18} and \cite{grf}, with the implementation taken from the $\textbf{grf}$-package in $R$. Each random forest is tuned using implemented data-driven routines, the number of trees in each forest is set to 2000, which is the packages default setting. 
In general, I find that tuning of internal forest parameters does improve the quality of estimates but is only of secondary importance for the overall shape of the density estimate.\\

The sample size is $n=1000$ and 100 Monte Carlo draws of the model in (\ref{MC::DGP}) are performed. The simulation results for $\widehat f_{B_1|X=x}$ are presented in Figure \ref{Plot::Sim_Mix_SD}.  
\begin{figure}[!htbp]
	
	\centering
	\includegraphics[height=0.33\textheight, width=0.75\textwidth]{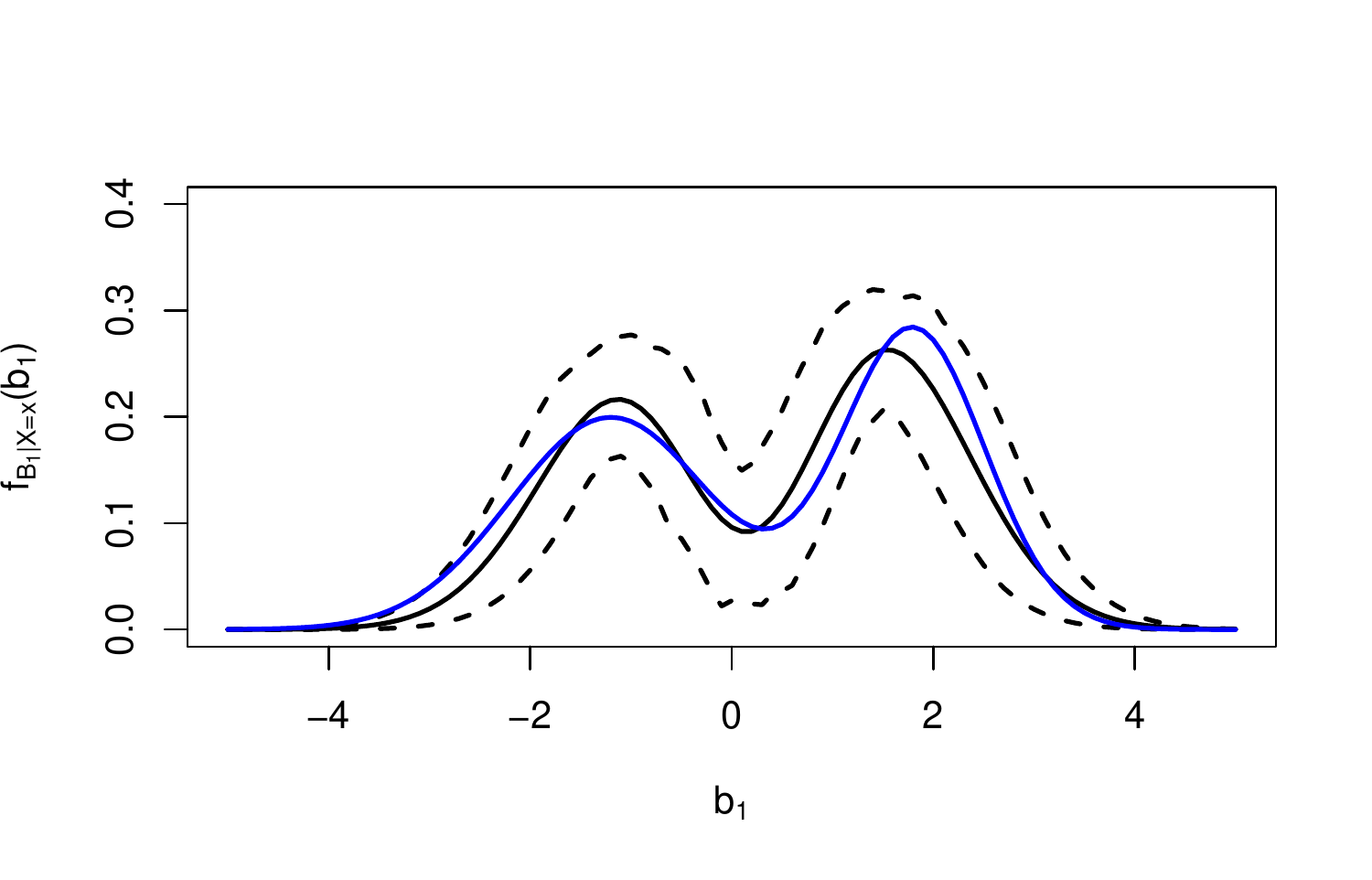}
	
	\caption{The solid black line denotes the median of the Monte Carlo estimates. The dotted lines the $95\%$- and $5\%$-quantiles. The solid blue line is the correct density. Key parameters: $K_2=3$ and $\sigma_t=1$ }
	\label{Plot::Sim_Mix_SD}	 
\end{figure} 
%
%
Figure \ref{Plot::Sim_Mix_SD} shows a favorable performance of the estimator even for a moderate sample size and for a coarse choice of $K_2$. \\ 

The second data generating process is,
\begin{align}\label{MC::DGP2}
Y & = B_0+ B_1 \cdot W, \;\; \text{with} \\
B_0 &= \sin(X_1)+ A_0 \nonumber \\
W & =  1+ X_3 + V\cdot (1+X_3^2) \nonumber
\end{align}
where all random variables are chosen as before and $B_1$ is a mixture distribution like $A_1$ in the first setting, but now with weights $\Phi(X_2), 1-\Phi(X_2)$.
So in this setting $X$ determines the entire shape of the density function as opposed to the first setting where $X$ only determines the center of the density. For the testpoint $x=(0,0.3,0,\dots,0)$, the conditional density is again bi-modal but now the mode on the negative part of the domain is more pronounced.   
All parameters and hyperparameters of the ML-procedures are as before but now $K_2=7$ to illustrate the performance of the estimator for a more complex model. As the density of $B_1$ is more dispersed compared to the first setting this increase in complexity can be rationalized. Note that the support of each of the Hermite basis functions increases with $K$. This leads to the suggestion to increase $K$ for highly dispersed densities or to otherwise scale down $Y$ and $W$ accordingly to control the maximal dispersion of the density.  
The simulation results are presented in Figure \ref{Plot::Sim_Class}.
\begin{figure}[!htbp]
	
	\centering
		\includegraphics[height=0.33\textheight, width=0.75\textwidth]{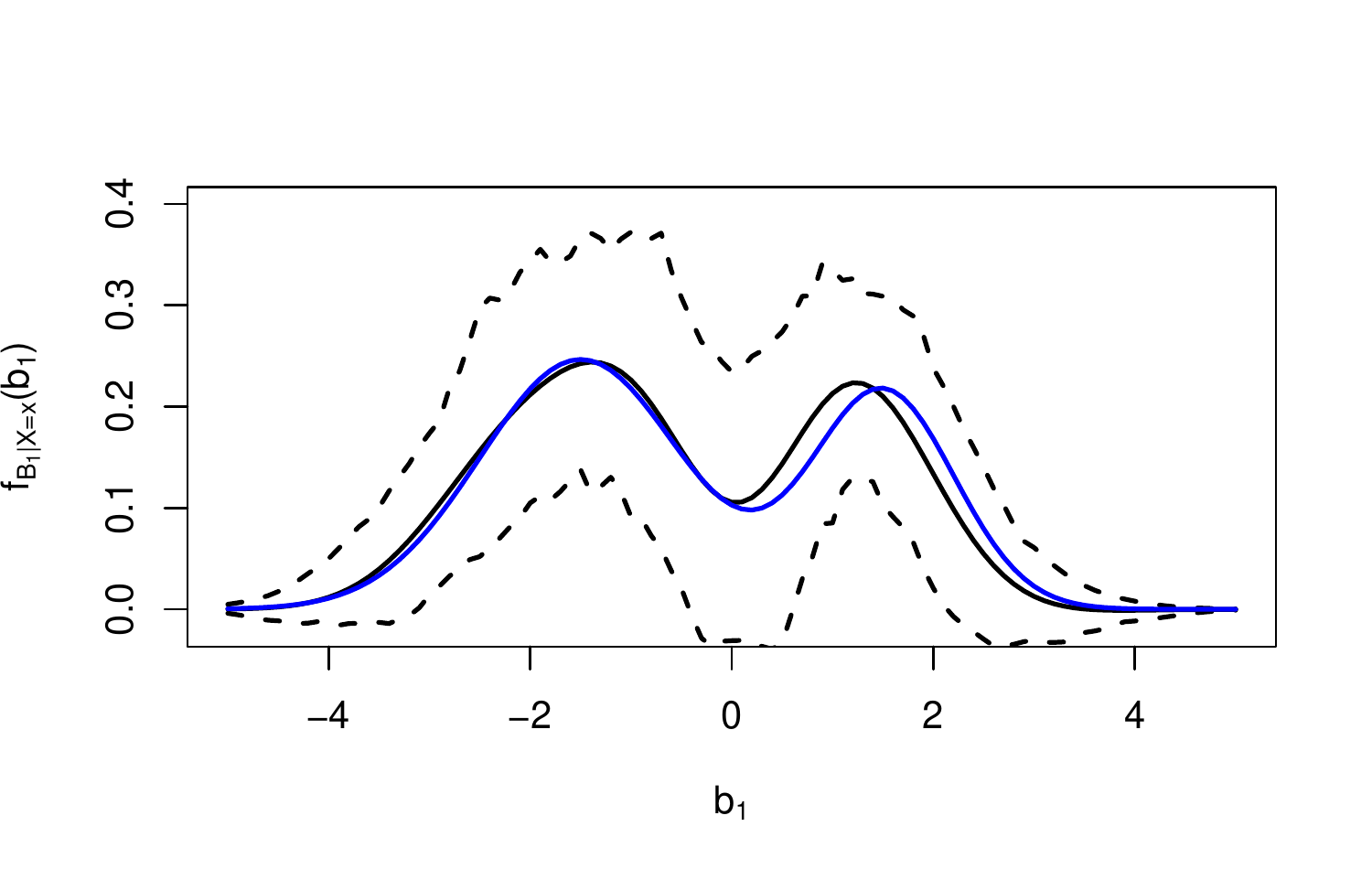}
	\caption{
		The solid black line denotes the median of the Monte Carlo estimates. The dotted lines the $95\%$- and $5\%$-quantiles. The solid blue line is the correct density. Key parameters: $K_2=7$ and $\sigma_t=1$. }
	\label{Plot::Sim_Class}	 
\end{figure}
Through the larger number of basis functions the bias is comparably lower then in Figure \ref{Plot::MargDensity} at the expense of increased confidence intervals. As there are no shape constraints, we see that density estimates can in principal have negative parts. 
Yet, the method can detect the conditional density reliably even when the entire shape of the conditional density varies with $X$. 


\section{Empirical Application}\label{Sec::Application}
    In this section the estimation strategy is applied to study heterogeneous effects of stock market expectations on portfolio choice. 
    I make use of the innovation sample of the german socio-economic panel (SOEP-IS). Therein survey respondents were supplied with a hypothetical amount of $50,000$ Euros and asked to split their investment among one risk-free and one risky asset with returns paid out one year later. 
    The risk-free asset is a state claim with a fixed annual interest rate of $4\%$ whereas the risky asset's return hinges on the return of the german stock market index (DAX) within the next year. 
    
    This experiment has been previously analyzed in \cite{HuckWeiz15}. Economic theory suggests that stock market expectations and  risk preferences are the main determinants of the portfolio choice task at hand. 
    
    The goal is to study the effect of stock market expectations on the investment in the risky asset. 
    Formulated as a random coefficient model I study the following econometric model,
    \begin{align*}
    Y_i = B_{0,i}+ B_{1,i}\cdot W_i
    \end{align*}
    where $Y_i$ denotes the individual investment in the risky asset, $W_i$ is the individual belief on the development of the DAX for the next year and $B_{1,i}$ is the individual effect of interest. The random intercept $B_{0,i}$ subsumes the effects of other controls $X$ and further unobservable characteristics on the outcome $Y_i$. 
    \begin{table}[h]
    	\centering
    	\begin{tabular}{l r r r r r r r r}
    		& Min. & 1. Quant & Median & Mean & 3. Quant. & Max.   \\
    		\hline\\[-1.0em]
    		$Y$ (in Euro) &  1000 &  15000 &  25000 &  24029 &  30000  & 50000    \\
    		$W$ (in $\%$-points)& -50 &  2 &   5 &  4.90 &   8  &130 \\
    	\end{tabular} 
    	\caption{Summary Statistics}\label{SummaryStats}
    \end{table}
    The set of controls is quite rich and contains 75 variables including information on socio-demographics such as gender, age or tertiary degrees as well as self-assessed measures of risk aversion, personality traits or skills in mathematical calculations. Summary statistics for the main variables are provided in Table \ref{SummaryStats}. 
    
    In order to apply the estimation method we need to assume that the conditional independence restriction of Assumption \ref{A::Ident1} (i) holds. Applied to the present setting this implies that stock market beliefs are exogenous conditional on the set of controls $X$. 
    The data is observational and beliefs are self-reported, so we cannot rule out relations between beliefs and other controls which rules out considering the standard, unconditional RC model that relies on full independence of random coefficients and controls. 
    
    Further beliefs $W$ must vary sufficiently in the population to plausibly fulfill the support restrictions in Assumption \ref{A::Ident1} (ii), which is the case in this setting. \\
    
    The analysis begins by choosing the tuning parameters $K$ and $\sigma_t$. 
    As the random slope is of main interest, I fix $K_1=3$ and $\sigma_t=1$ and
%
%
     vary the choice of $K_2$. Figure \ref{Plot::Eyeballing} presents various estimates for different choices of $K_2$ and a suitable choice of $K_2$ can be eyeballed. For most choices of $K_2$ the two modes of the density are centered as for the case $K_2=5$ which thus appears to be a reasonable and coarse choice for the remainder of this analysis.    
     
    Next, I set a testpoint $x$ that corresponds to the medians of the variables in $X$. For those individuals with "median" characteristics $X=x$  an estimate of the random slope density $f_{B_1|X=x}$ is presented in \ref{Plot::MargDensity} below. 
    \begin{figure}[!htbp]
    	\centering
    	\includegraphics[height=0.33\textheight, width=.75\textwidth]{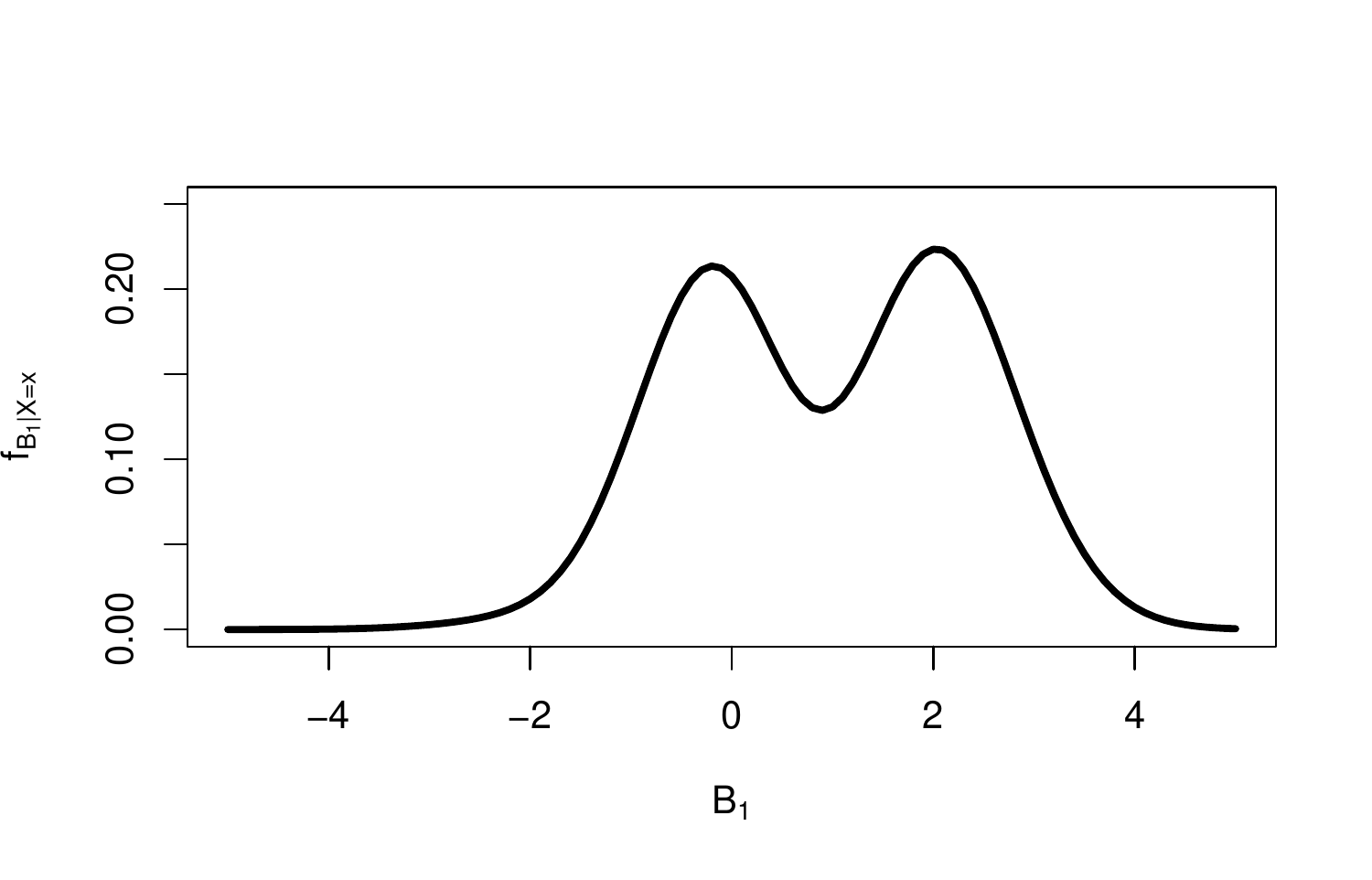}
    	
    	\caption{Estimate of the conditional density $f_{B_1|X=x}$. The tuning parameters have been chosen as $K_1=3$, $K_2=5$ (in total $K=15$) and $\sigma_t =1$. $M=100$ sample splits are performed. The testpoint $x$ is chosen as the medians of the variables in $X$.} 
    	\label{Plot::MargDensity}	 
    \end{figure}
Most notable is the bi-modal shape of the density with one mode centered around zero and another around 2. Note that variables $Y$ and $W$ have been rescaled such that a value of 2 can be interpreted in the following way. A $1\%$-point increase in beliefs is associated with investing 1000 Euro (that is $2\%$ of available funds) more into the risky asset. 

Such a bi-modal density corresponds to the existence of two types in the population. For one part of the population stock market expectations are actually linked to investment in the stock index as predicted by economic theory. Higher expectations also lead to a larger investment in the stock index. This does not seem to be true for a second group in the population where the marginal effect centers around zero. This part of the population may follow different, e.g. heuristic decision rules in their portfolio choice and their stated beliefs do not appear to be a relevant constituent of the investment decision. 

This appearance of types is in line with other results from the portfolio choice literature such as \cite{GaudeckerME}. They establish a link between the precision of subjective beliefs and the predictive power of economic models. Whenever beliefs are rather crude and imprecise they are likely not determinants of a rational portfolio choice. For individuals with such beliefs, economic theory has a rather low power in predicting their stock market participation. 
This is in line with the finding here that for some part of the population their stated, subjective beliefs do not seem to influence their investment decision. \\

So far the finding indicates the existence of two, equally-large groups in the population. One group for which beliefs seem to have an impact on investment choice and one where it does not. 
Next I study heterogeneity of the random slope densities, i.e. consider estimates of $f_{B_1|X=x}$ evaluated at different testpoints $x$. This is interesting because the type distribution may vary across subpopulations with different observable characteristics $X$. 

To get an idea of which variables may drive the heterogeneity I report a variable importance measure for the density's shape and center, as outlined in Section \ref{Sec::CVetc}. 
The largest variable importance scores among the 75 control variables are reported in Table \ref{VI_measures}.
    \begin{table}[!htbp]
	\centering
	\begin{tabular}{l r r r r r r }
		& "age" & "daxnetto1" & "daxnetto2" & "prisk"  &   "isb011"    \\
		\hline\\[-1.0em]
		$VI_{shape}$  &  0.052 &  0.043 &  0.046 &  0.035 &  0.031     \\
		$VI_{mean}$ & 0.032 &  0.050 &   0.049 &  0.025 &   0.025  \\
	\end{tabular}
	\caption{Variable importance measures for those variables in $X$ with largest scores.}\label{VI_measures}
\end{table}
  
 There does not appear to be much variation in densities across different controls. The most importance is given to the age variable followed by "daxnetto1" and "daxnetto2" which are randomly selected information on past annual DAX returns that were presented to the survey respondents before the investment game. The other two are a measure of risk-aversion and a measure of self-assessed patience.  

Taking these results allows to investigate heterogeneity with respect to age and the historic information. 

Figure \ref{Plot::AgeHet} shows the heterogeneity in random slope densities for different age groups. Therefore the conditional density estimate is evaluated at three different testpoints. The variable age is varied but all other points are set to the respective sample median value of the variables. So $x$ is as in Figure \ref{Plot::MargDensity} except that age is varied.   
    \begin{figure}[!htbp]
	\centering
	\includegraphics[height=0.33\textheight, width=0.75\textwidth]{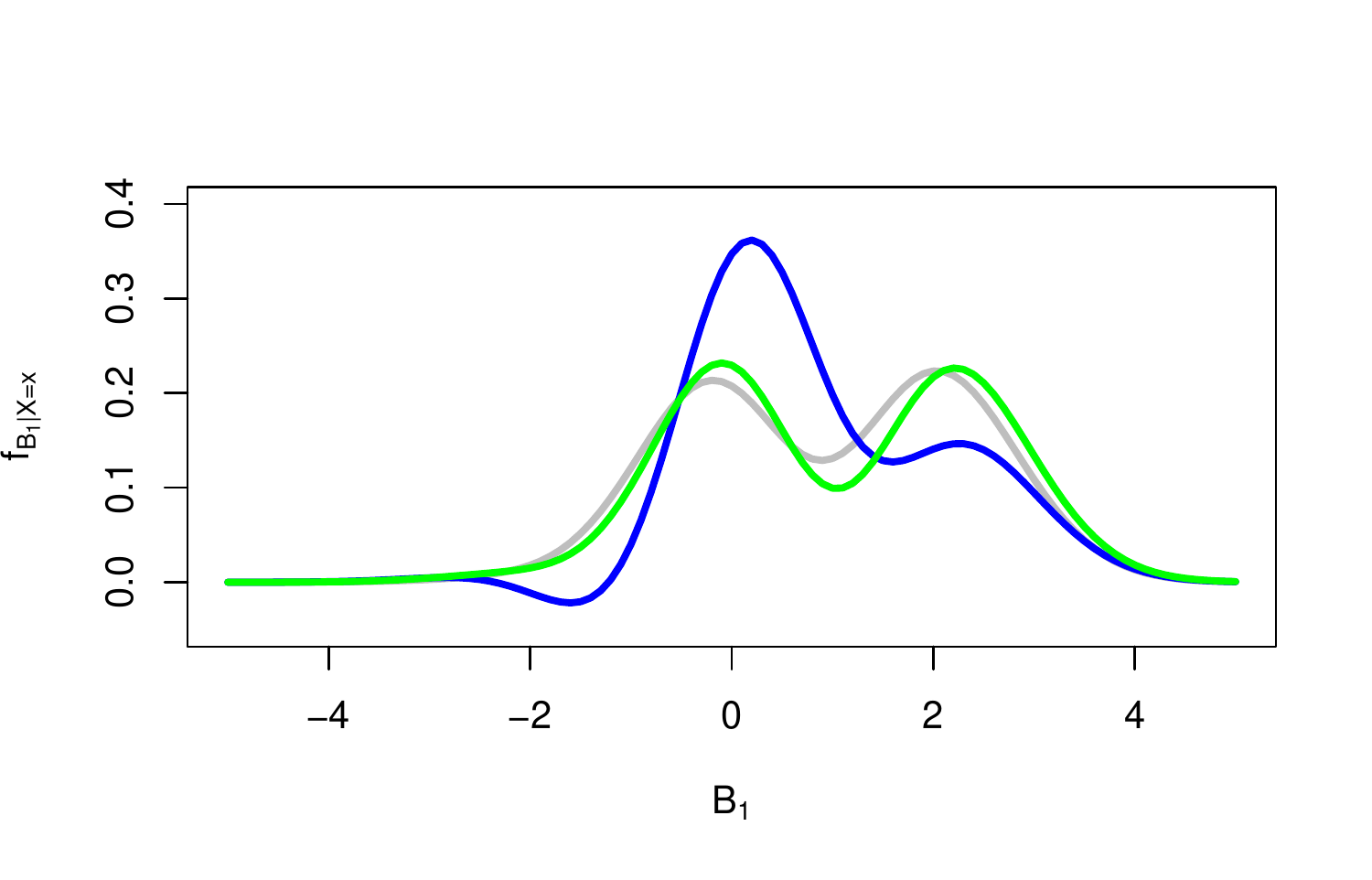}
	
	\caption{Estimate of the conditional density of $B_1|X=x$ for three different testpoints for $x$. The green line denotes the density for $age=30$, the grey line for $age=49$ and the blue line for $age=70$. The tuning parameters have been chosen as in Figure \ref{Plot::MargDensity}. } 
	\label{Plot::AgeHet}	 
\end{figure}
The most prominent descriptive fact here is that the type composure in the population seems to vary with age. For the young and medium aged subpopulations both types are equal in size. For the elder subpopulation fewer people behave according to economic theory. 

Next I also consider heterogeneity with respect to the historic information that has been displayed to the respondents. Again there are three testpoints. There is one testpoint where both historic informations have been very positive (return of 35\%), one where both informations are negative (return of -5\%) and one mixed with a positive first information and a negative second. The results are displayed in Figure \ref{Plot::AgeHet2}.
     \begin{figure}[!htbp]
 	\centering
 	\includegraphics[height=0.33\textheight, width=0.75\textwidth]{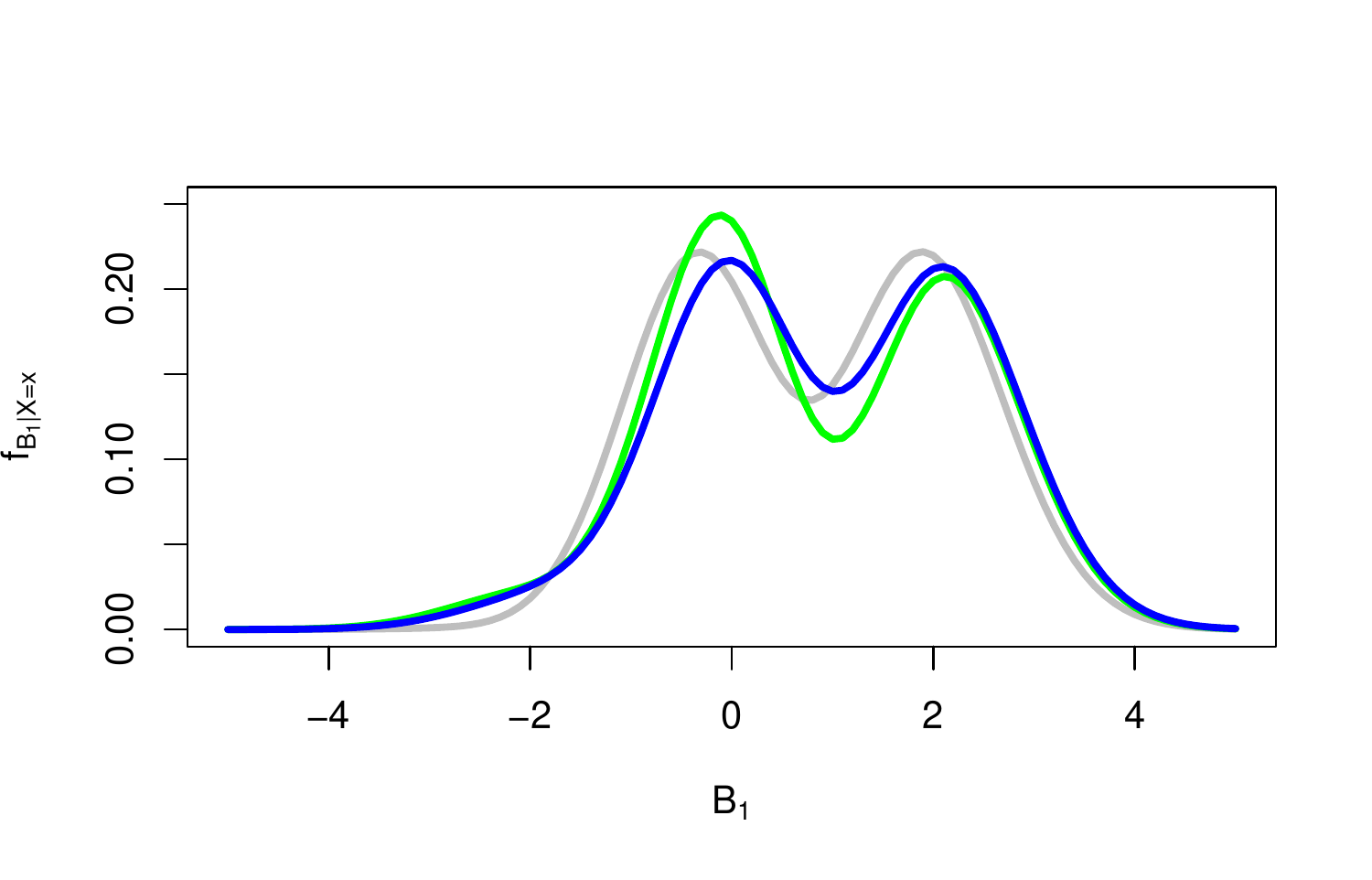}
 	
 	\caption{Estimate of the conditional density of $B_1|X=x$ for three different testpoints for $x$. The blue line denotes the density for $x$ with $daxnetto1=daxnetto2 = 35$, the grey line for $daxnetto1=daxnetto2 = -5$ and the green line for $daxnetto1=35$ and $daxnetto2 = -5$. The tuning parameters have been chosen as in Figure \ref{Plot::MargDensity}.} 
 	\label{Plot::AgeHet2}	 
 \end{figure}
Here there is no apparent or robust heterogeneity with respect to the historic information. 
We therefore stop the analysis and do not vary according to variables with a lower variable importance score than that of $daxnetto2$. 

The random coefficient analysis suggests the presence of two, roughly equally-sized types in the population. One group of individuals complies with economic theory in that their stock market expectation also explains their investment in a risky asset. A second group seems to follow different decision rules, their beliefs have no impact on their investment decision. 
Due to a possible correlation of beliefs and random coefficients this finding cannot be inferred from estimating standard, marginal random coefficient models.   

Regarding heterogeneity I find that the mixture of types in the populuation may depend on age but fail to uncover more interesting heterogeneity with the given data. 

It appears that the main determinants of type membership are unobservables that are not captured in the given data set.  
    \begin{figure}[!htbp]
	\centering
	\includegraphics[height=0.3\textheight, width=\textwidth]{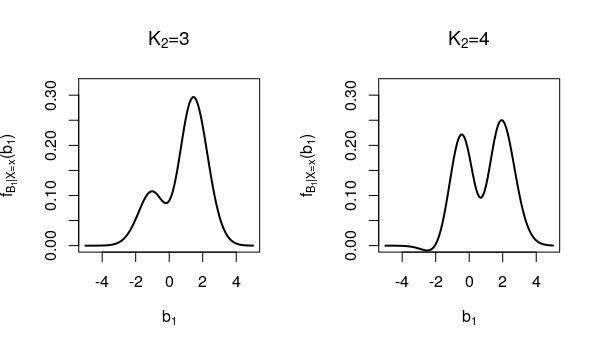} \\
	\includegraphics[height=0.3\textheight, width=\textwidth]{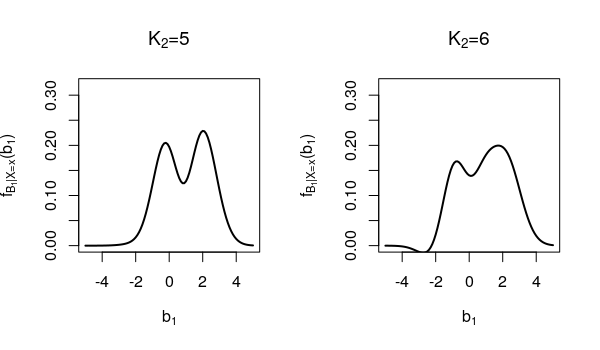}
	\includegraphics[height=0.3\textheight, width=\textwidth]{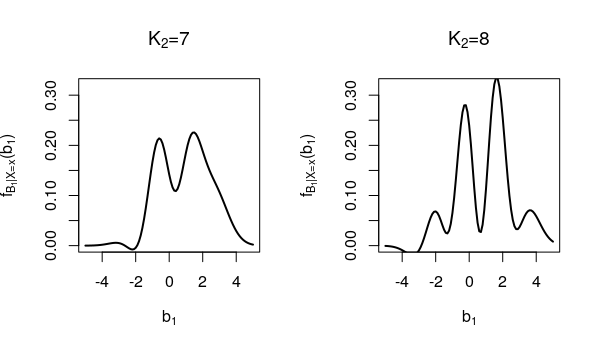}
	\caption{Estimates of the RC-density for various choices of $K_2$.} \label{Plot::Eyeballing}	 
\end{figure}

\section{Conclusion}\label{Sec::Conclusion}
The present paper discusses estimation of conditional random coefficient densities when the set of conditioning variables is large. The very general conditional RC model has rarely been studied in both theory and application. This paper provides a general sieve estimation strategy for estimating conditional RC densities. The approach enables the use of generic machine learning methods to estimate sieve coefficients in the presence of a large dimensional set of control variables. Therefore the estimator is applicable in many economic settings in which a continuous treatment variable is available.  
Theoretical results of the paper include convergence rate and inference results for the conditional sieve RC density estimator which combine asymptotic theories of sieve estimators and machine learning methods, in particular applying results on (honest) random forests. 

The finite sample properties of the estimator are illustrated in a Monte Carlo simulation study and an empirical application. The application reveals behavioral heterogeneity in an experimental portfolio choice task which is in line with recent empirical findings in the literature.

\appendix
\section{Proof of Theorems}
\begin{proof}[Proof of Lemma \ref{Lem::Ident}]
	Let $\phi_{Y|X}(t|x)=\Ex[\exp(itY)\;|\; X=x]$ denote the conditional characteristic function of $Y$ given $X=x$. The following holds
	\begin{align*}
	\phi_{Y|X, W}(t|x,w) &=  \Ex[\exp(itY)\;|\; X=x, W=w] \\
	&=  \Ex[\exp(it(B_0+B_1 W))\;|\; X=x, W=w]\\
	&= \Ex[\exp(i(t, tw)'(B_0, B_1))\;|\; X=x, W=w]\\
	&= \Ex[\exp(i(t, tw)'(B_0, B_1))\;|\; X=x]\\
	&=  \phi_{B_0, B_1 |X}(t, tw|x)
	\end{align*}
	which is in fact already enough to point identify the probability distribution of $B\;|\;X=x$. By varying both $t$ and $w$ it is possible to evaluate the characteristic function of $B\;|\;X=x$ at any point in $\mathbb{R}^2$. See the proof of Lemma 1 in \cite{masten17_RC} and the references therein for details. Here is where Assumption \ref{A::Ident1} is required in that the support of $W\;|\;X=x$ is the entire real line $\mathbb{R}$. 
	
	The main interest in practical applications is in identifying the density function $f_{B|X=x}$ which follows from applying the inverse Fourier transform to $\phi_{B_0, B_1 |X=x}(t, tw)$. 
	The Fourier transformation $\mathcal{F}$ and the inverse Fourier transformation $\mathcal{F}^{-1}$ are defined as
	\begin{align*}
	(\mathcal{F}f)(t) & =\int_{\mathbb{R}^d} \exp(it'a)f(a)da \\
	(\mathcal{F}^{-1}g)(a) &= \frac{1}{(2\pi)^d}\int_{\mathbb{R}^d} \exp(-ia't)g(t)dt 
	\end{align*}
	for some functions $f,g: \mathbb{R}^d \to \mathbb{R}$ and $\mathcal{F}: \mathbb{R}^d \to \mathbb{C}^d$ and $\mathcal{F}^{-1}: \mathbb{C}^d \to \mathbb{R}^d$. The Fourier transform generally links the characteristic function of a random variable to its density function, in particular here $\phi_{B_0, B_1 |X}(t, tw|x) = (\mathcal{F}f_{B_0, B_1 |X=x})(t, tw)$.
	
	From this we can infer the following
	\begin{align*}
	f_{B|X}(b|x) & = \frac{1}{(2\pi)^2}\int_{\mathbb{R}^2} \exp(-ib's)(\mathcal{F} f_{B|X=x} )(s)ds \\
	&=  \frac{1}{(2\pi)^2}\int_{\mathbb{R}^2} |t| \exp(-ib'(t,tw))(\mathcal{F} f_{B|X=x} )(t, tw)dt dw \\
	&=  \frac{1}{(2\pi)^2}\int_{\mathbb{R}^2} |t|\exp(-ib'(t,tw))(\mathcal{F} f_{B|X=x} )(t, tw)dt dw \\
	&=  \frac{1}{(2\pi)^2}\int_{\mathbb{R}^2} |t| \exp(-ib'(t,tw))\phi_{B_0, B_1 |X}(t, tw|x)dt dw \\
	&=  \frac{1}{(2\pi)^2}\int_{\mathbb{R}^2} |t| \exp(-ib'(t,tw))\phi_{Y|X, W}(t|x,w)dt dw 
	\end{align*}
	
	which establishes identification of the density function $f_{B|X=x}$.
\end{proof}

\begin{proof}[Proof of Theorem \ref{Thm::Rate}]
	Let $\mynorm{\cdot}_E$ denote the euclidean norm of a (complex-) vector and define $\mynorm{f-g}=\int_{\mathbb{R}^2} |f(a)-g(a)|^2da$ and $\mynorm{f-g}_{\nu,\mu}=\int_{\mathbb{R}^2} |\mathcal{F}f(t,w)-\mathcal{F}g(t,w)|^2d\nu(t)d\mu(w)$ for arbitrary functions $f,g:\mathbb{C}^2 \to \mathbb{R}$.
	 Consider the following decomposition

	\begin{align*}
	\mynorm{\widehat{f}_{B|X=x} - {f}_{B|X=x}}^2 & \leq \mynorm{ \widehat{f}_{A|X=x} - f_{A|X=x} }^2 + \mynorm{ f_{A|X=x}(\cdot - \widehat \beta(x)) - f_{A|X=x}(\cdot -\beta(x))}^2 \\
	& \leq  \mynorm{\widehat{f}_{A|X=x} - \widetilde{f}_{A|X=x}}^2 + \mynorm{\widetilde{f}_{A|X=x} - {f}_{A|X=x}}^2  \\
	& \quad + \mynorm{ f_{A|X=x}(\cdot - \widehat \beta(x)) - f_{A|X=x}(\cdot -\beta(x))}^2 \\
	& = A+ B+ C
	\end{align*}
	The proof begins by examining summand $B$. To this end recall that
	\begin{align*}
	P_K{f}_{A|X=x} = \arg \min_{\phi \in \mathcal{B}_K} \mynorm{\phi - {f}_{A|X=x} }
	\end{align*}
	which is the $L_2$-projection of ${f}_{A|X=x}$ on the sieve space $\mathcal{B}_K$.  It further holds for every $x\in \mathcal{X}$,
	
	\begin{align*}
	\mynorm{\widetilde{f}_{A|X=x} - {f}_{A|X=x}}^2 &\leq\mynorm{\widetilde{f}_{A|X=x} - P_K {f}_{A|X=x}}^2  + \mynorm{P_K{f}_{A|X=x} - {f}_{A|X=x}}^2 \\
	& \leq {\tau_K}^{-1} \mynorm{\mathcal{F}\widetilde{f}_{A|X=x} - \mathcal{F}P_K{f}_{A|X=x} }^2_{v,\mu} + O(K^{-\alpha})  \\
	& \leq  {\tau_K}^{-1}\Big[ \mynorm{\mathcal{F}\widetilde{f}_{A|X=x} - \mathcal{F}{f}_{A|X=x} }^2_{v,\mu} +  \mynorm{\mathcal{F}{f}_{A|X=x} - \mathcal{F}P_K{f}_{A|X=x} }^2_{v,\mu} \Big] + O(K^{-\alpha}) \\
	& \leq {\tau_K}^{-1} \mynorm{\mathcal{F}{f}_{A|X=x} - \mathcal{F}P_K{f}_{A|X=x} }^2_{v,\mu}  + O(K^{-\alpha}) \\
	& = O(K^{-\alpha})
	\end{align*}
	where we have used the link condition $\mynorm{\mathcal{F}{f}_{A|X=x} - \mathcal{F}\Pi_K{f}_{A|X=x} }^2_{v,\mu}=O(\tau_K \mynorm{\Pi_K{f}_{A|X=x} - {f}_{A|X=x}}^2)$ and the fact that
	\begin{align*}
	\widetilde{f}_{A|X=x}=\arg \min_{\phi \in \mathcal{B}_K} \mynorm{\mathcal{F}\phi - \mathcal{F}{f}_{A|X=x} }^2_{v,\mu}.
	\end{align*} 
	
	\noindent Next consider the first summand $A$. It holds that  
	\begin{align}
	\mynorm{\widehat{f}_{A|X=x} - \widetilde{f}_{A|X=x}}^2 = & \mynorm{q^K(\cdot)'\widehat Q^{-1} \widehat \Pi_{dm}(x) - q^K(\cdot)'Q^{-1} \Pi(x) }^2 \nonumber \\
	\leq &  \mynorm{q^K(\cdot)'(\widehat Q^{-1} - Q^{-1})\Pi(x)}^2 + \mynorm{q^K(\cdot)'(\widehat Q^{-1} - Q^{-1}) (\widehat \Pi_{dm}(x) - \Pi(x))}^2  \nonumber \\
	 & + \mynorm{q^K(\cdot)'\widehat Q^{-1}(\widehat \Pi_{dm}(x) - \Pi(x))}^2 \nonumber \\
	=:& I + II + III \nonumber
	\end{align}
	
	where in the following we consider each term separately. We begin with term III where we have 
	\begin{align*}
	\mynorm{q^K(\cdot)'Q^{-1}(\widehat \Pi_{dm}(x) -  \Pi(x))}^2 & = [\widehat \Pi_{dm}(x) -  \Pi(x)]'Q^{-1}\left(\int_{\mathbb{R}^2} q^K(b)q^K(b)'db \right)  Q^{-1}[\widehat \Pi_{dm}(x) -  \Pi(x)]\\
	& \lesssim \mynorm{\widehat \Pi_{dm}(x) -  \Pi(x)}_E^2 \mynorm{Q^{-1}}^2 \\
	& \lesssim \tau^{-2}_K \Big[\mynorm{\widehat \Pi_{dm}(x) -  \Pi_{dm}(x) }^2_E  + \mynorm{ \Pi_{dm}(x) -  \Pi(x)  }^2_E    \Big] \\
	& \lesssim \tau^{-2}_K \Big[ K \cdot O_p( n^{-2\varphi}) +  O_p(K \cdot n^{-2\varphi})   \Big] \\
	& = O_p(\tau^{-2}_K \frac{K}{n^{2\varphi}} ).
	\end{align*}
	and made use of the sample splitting rule and the convergence rate of ML-estimates in Assumption \ref{A::MLRates} (i). Without sample splitting the behavior of $\mynorm{\widehat \Pi_{dm}(x) -  \Pi_{dm}(x)}_{E}$ cannot be established. The behavior of $\mynorm{ \Pi_{dm}(x) -  \Pi(x)  }^2_E$ follows from Lemma \ref{Lem1::Appendix} (ii).
	

Next, consider the term $I$. We have
	\begin{align*}
  \mynorm{q^K(\cdot)'(\widehat Q^{-1} - Q^{-1})\Pi(x)}^2  & = \Pi(X)'(\widehat Q^{-1} - Q^{-1})\left(\int_{\mathbb{R}^2} q^K(b)q^K(b)'db \right)(\widehat Q^{-1} - Q^{-1})\Pi(x)\\
	& \lesssim \mynorm{\Pi(x)}_E^2 \cdot  \mynorm{\widehat Q^{-1} - Q^{-1} }^2 \\
	& \lesssim K \cdot O_p\left(\frac{K\tau^{-1}_K\log(K)}{n}\right) \\
	& \lesssim o_p(\tau^{-2}_K \frac{K}{n^{2\varphi}} )
	\end{align*}
	which holds by Assumption \ref{A::Rate} (iv) and applying Rudelsons LLN, as in the second part of Lemma 6.2. in \cite{BellChern_Series} to $\mynorm{\widehat Q^{-1} - Q^{-1}}$. The last inequality holds by the rate restriction in Assumption \ref{A::MLRates} (iii).  \\
	
	\noindent It remains to analyze II. Under the same reasoning as above we obtain
	\begin{align*}
	& \quad  \mynorm{q^K(\cdot)'(\widehat Q^{-1} - Q^{-1}) (\widehat \Pi_{dm}(X) -  \Pi_{dm}(X))} \\
	& \lesssim  \mynorm{\widehat \Pi_{dm}(X) -  \Pi(X)}_E^2 \mynorm{\widehat Q^{-1} - Q^{-1} }^2 \\
	&\lesssim O_p\left(\frac{K}{n^{2\varphi}}\right)  \cdot O_p\left(\frac{K\tau^{-1}_K\log(K)}{n}\right) \\
	& \lesssim o_p\left(\tau^{-2}_K \frac{K}{n^{2\varphi}}\right)
	\end{align*}
	and collecting terms we can conclude that $A=  O_p(\tau^{-2}_K K/n^{2\varphi} )$.
    
    Finally it remains to consider $C$.  
	There exists some $\tau \in (0,1)$ such that the following holds
	\begin{align*}
	\mynorm{ f_{A|X}(\cdot - \widehat \beta(X)) - f_{A|X}(\cdot -\beta(X))} & \leq    \int \mynorm{ \nabla f_{A|X}(b-\xi)}db \cdot \mynorm{\widehat \beta(X)- \beta(X) }_E^2 \\
	& = O_p(1) \cdot O_p(n^{-2\varphi}) =o_p(\tau^{-2}_K \frac{K}{n^{2\varphi}})
	\end{align*}
	where $\xi=\beta(X)(1-\tau) + \tau \widehat{\beta}(X)$ and the last bound following
   from Assumption \ref{A::Rate} (v) and \ref{A::MLRates}. 
	This establishes the final result of Theorem \ref{Thm::Rate}. 
\end{proof}

\begin{proof}
	In the case outlined in the corollary, we have 
	\begin{align*}
	Q & = \Ex\left[\int_{\mathbb{R}} \mathcal{F}q^K(-t, -t(W-g(X)))\mathcal{F}q^K(-t, -t(W-g(X)))'d\nu(t) \right] \\
	\widehat{Q} & = \sum_{i=1}^{\mathcal{R}}  \int_{\mathbb{R}}  \mathcal{F}q^K(-t, -t(W_i-\widehat{g}(X_i))\mathcal{F}q^K(-t, -t(W_i-\widehat{g}(X_i)))'d\nu(t) \\
	\widetilde{Q} & = \Ex\left[\int_{\mathbb{R}} \mathcal{F}q^K(-t, -t (W-\widehat{g}(X)))\mathcal{F}q^K(-t, -t(W-\widehat{g}(X)))'d\nu(t)\right] .
	\end{align*}
	The main part is to consider the behavior of $\mynorm{\widehat{Q} -Q}$. It holds that
	\begin{align*}
\mynorm{\widehat{Q} -Q} &= \mynorm{\widehat{Q} - \widetilde{Q}} + \mynorm{\widetilde{Q} - Q} \\
& = O_p\left(\sqrt{\frac{K\tau_K^{-1}\log(K)}{n}}\right) + \mynorm{\widetilde{Q} - Q}  
	\end{align*}
again by Rudelson's LLN. For the second part $\mynorm{\widetilde{Q} - Q}$ it suffices to check the quantity
\begin{align}\label{Q_ineq}
&|| \mathcal{F}q^K(-t, -t (W-\widehat{g}(X)))\mathcal{F}q^K(-t, -t(W-\widehat{g}(X))) \\
- & \mathcal{F}q^K(-t, -t (W-g(X)))\mathcal{F}q^K(-t, -t(W-g(X))) ||.  \nonumber
\end{align}	
For arbitrary complex vectors $a,b$ it holds that
\begin{align*}
\mynorm{aa'-bb'}& = \mynorm{(a-b)(a-b)'+ (a-b)b'+b(a-b)'} \\
  & \leq 2\cdot \mynorm{a-b}+ 2\cdot \mynorm{b}\cdot \mynorm{a-b}
\end{align*}
and applying this to (\ref{Q_ineq}) leads to 
\begin{align*}
(\ref{Q_ineq}) & \leq 2\cdot (1 + \mynorm{\mathcal{F}q^K(-t, -t (W-g(X)))}) \cdot  \mynorm{\mathcal{F}q^K(-t, -t (W-\widehat{g}(X)) - \mathcal{F}q^K(-t, -t (W-g(X))}  \\
& \leq 2\cdot (1 + \mynorm{\mathcal{F}q^K(-t, -t (W-g(X)))}) \cdot \mynorm{ D\mathcal{F}q^K(\xi)}\cdot \mynorm{\widehat g (X)-g(X)} 
\end{align*}
which implies that 
\begin{align*}
\mynorm{\widetilde{Q} - Q} \lesssim \sqrt{K} \cdot K \cdot O_p(n^{-2\varphi}) 
\end{align*}
and thus 
\begin{align*}
\mynorm{\widehat{Q} -Q} =  O_p\left(\sqrt{\frac{K\tau_K^{-1}\log(K)}{n}}\right) + O_p \left( \frac{K^{3/2}}{ n^{2\varphi}}  \right)
\end{align*}
The difference to the proof of Theorem \ref{Thm::Rate} is only in checking terms I, II and III. 
By applying Lemma \ref{Lem1::Appendix} (i) it holds that 
\begin{equation*}
III = O_p(\tau_K^{-2} \frac{K^2}{n^{2\varphi}})
\end{equation*}
and further from the rate of $\mynorm{\widehat{Q} -Q}$ above and the rate restriction stated in the Corollary that
\begin{equation*}
I = o_p(\tau_K^{-2} \frac{K^2}{n^{2\varphi}}).
\end{equation*}
From III and I it is apparent that II is asymptotically negligible, which leads to the stated result.
\end{proof}

\begin{proof}[Proof of Lemma \ref{Lem::ForestAN}]
For any single (honest) random forest predictor $\widehat{\Pi}_{dm,k}(x)$ Theorem 1 of \cite{AtheyWager18} establishes its asymptotic normality under the assumptions stated in the Theorem itself. 
For the proof of Lemma \ref{Lem::ForestAN} it suffices to adapt their steps to the case $q^K(b)'Q^{-1}(\widehat{\Pi}_{dm,1}(x), \dots, \widehat{\Pi}_{dm,K}(x)$. To simplify notation for the remainder of the proof $q^K=q^K(b)$ and $\widehat{\Pi}_{dm,k}=\widehat{\Pi}_{dm,k}(x)$.
The proof proceeds treating $\widehat \Pi_{dm, k}$ as a pure forest estimator. Applying the integration to obtain the true $\widehat \Pi_{dm, k}$ of (\ref{Final Pi_dm}), does not change the derivations, as integration is a linear, monotonic operator and the theory in \cite{AtheyWager18} goes through. \\
 
Let $\overset{\circ}{\widehat{\Pi}}_{dm,k}$ denote the Hajek projection of the forest predictor  	
and under a slight abuse of notation let $\overset{\circ}{\widehat{\Pi}}_{dm} = (\overset{\circ}{\widehat{\Pi}}_{dm,1}, \dots, \overset{\circ}{\widehat{\Pi}}_{dm,K}  )'$.  \\

In broad steps the proof of \cite{AtheyWager18} proceeds by checking that for a given forest predictor 
$\widehat{\Pi}_{dm,k}$ it holds that
\begin{align}
  & \frac{\overset{\circ}{\widehat{\Pi}}_{dm,k} - \Ex[\overset{\circ}{\widehat{\Pi}}_{dm,k}]}{\sigma_{dm,k}}  \overset{d}{\to} N(0,1) \\
 & \Ex\left[\left( \widehat{\Pi}_{dm,k} - \overset{\circ}{\widehat{\Pi}}_{dm,k} \right)^2\right]  /\sigma^2_{dm,k} \to 0 \\
 & \frac{\Ex[\widehat{\Pi}_{dm,k}] -\Pi_{dm,k} }{\sigma_{dm,K}}  \to 0
\end{align}
where (A.1) and (A.2) are shown in the proof of Theorem 8 and (A.3) in the proof of Theorem 1 of \cite{AtheyWager18}. The quantity $\sigma_{dm,k}$ is in fact the standard deviation of the Hajek projection.  

I follow along their steps and show that the following holds under the Assumptions stated in Lemma \ref{Lem::ForestAN}: 
\begin{align*}
I := & \frac{q^KQ^{-1}(\overset{\circ}{\widehat{\Pi}}_{dm} - \Ex[\overset{\circ}{\widehat{\Pi}}_{dm}])}{\mynorm{q^KQ^{-1}\Sigma_n}} \overset{d}{\to} N(0,1) \\
II := &\Ex\left[\left(q^KQ^{-1}\left(\widehat{\Pi}_{dm} - \overset{\circ}{\widehat{\Pi}}_{dm} \right)\right)^2\right]  /\mynorm{q^KQ^{-1}\Sigma_n}^2 \to 0 \\
III := & \frac{q^KQ^{-1}(\Ex[\widehat{\Pi}_{dm}] -\Pi_{dm}) }{\mynorm{q^KQ^{-1}\Sigma_n}} \to 0
\end{align*}
which taken together implies that $q^KQ^{-1}(\widehat{\Pi}_{dm} - \Pi_{dm})/\mynorm{q^KQ^{-1}\Sigma_n} $ is asymptotically normal. \\

Before we need to introduce and adapt some of the notation from the proofs of \cite{AtheyWager18}. Let $s$ denote the subsample size used to construct the random forest from single tree predictors $\widehat{T}=(\widehat{T}_1 ,\dots, \widehat{T}_K)$ where $\widehat T_k=\widehat T_k(x; \mathcal{R}_k)$
which is a single tree predictor for the conditional expectation $\Ex[T_k(W-\widehat g(X), Y-\widehat m(X,W))|X=x]$ making use of the data points in the respective sample $\mathcal{R}_k$.

We begin with part I. 
Plugging in the expression for the Hajek projection of the random forest on page 53 of the supplemental material of \cite{AtheyWager18} we obtain the identity
\begin{align*}
q^KQ^{-1}(\overset{\circ}{\widehat{\Pi}}_{dm} - \Ex[\overset{\circ}{\widehat{\Pi}}_{dm}]) = \frac{s\cdot K}{n}\sum_{i=1}^{n/K} q^KQ^{-1}(\Ex[\widehat T |\mathcal{R}_i] - \Ex[\widehat T])
\end{align*}
where $\Ex[\widehat T |\mathcal{R}_i] = (\Ex[\widehat T_1 |\mathcal{R}_{1,i}], \dots, \Ex[\widehat T_K |\mathcal{R}_{K,i}])'$ and $\mathcal{R}_{k,i}$ is the $i$-th observation in sample $\mathcal{R}_{k}$.  Further,
\begin{align*}
\mynorm{q^KQ^{-1}\Sigma_n}= \frac{s\cdot K }{n} \sqrt{ \sum_{i=1}^{n/K} q^KQ^{-1}Var(\widehat T)  Q^{-1}q^K}
\end{align*} 
where $Var(\widehat T)=diag(Var(\widehat{T}_1), \dots, Var(\widehat{T}_K))$ and which
holds from applying the identity on the last line of page 52 and thus we can write
\begin{align*}
I = \frac{\sum_{i=1}^{n/K} q^KQ^{-1}(\Ex[\widehat T |\mathcal{R}_i] - \Ex[\widehat T]) }{\sqrt{ \sum_{i=1}^{n/K} q^KQ^{-1}Var(\widehat{T})  Q^{-1}q^K} }
\end{align*}
and establish the asymptotic normality of I by checking Lyapunov's condition
\begin{align}\label{Lyapunov}
 \frac{\sum_{i=1}^{n/K} \Ex[| q^KQ^{-1}(\Ex[\widehat T |\mathcal{R}_i] - \Ex[\widehat T])|^{2+\delta}] }{\left(\sum_{i=1}^{n/K} q^KQ^{-1}Var(\widehat{T})  Q^{-1}q^K \right)^{1+\delta/2}} \to 0
\end{align}
For the numerator we have by Cauchy-Schwarz
\begin{align*}
\sum_{i=1}^{n/K} \Ex[| q^KQ^{-1}(\Ex[\widehat T |\mathcal{R}_i] - \Ex[\widehat T])|^{2+\delta}] & \leq \mynorm{q^KQ^{-1}}^{2+\delta} \cdot \sum_{k=1}^{K} \sum_{i=1}^{n/K} \Ex[|\Ex[\widehat T_k |\mathcal{R}_{k,i}] - \Ex[\widehat T_k]|^{2+\delta}  ]  \\
& \leq \mynorm{q^KQ^{-1}}^{2+\delta} \cdot K \cdot \sum_{i=1}^{n/K} \max_{k}  \Ex[|\Ex[\widehat T_k |\mathcal{R}_{k,i}] - \Ex[\widehat T_k]|^{2+\delta}] 
\end{align*}
where the last inequality is due to the last display in the proof of Theorem 8.
The denominator satisfies
\begin{align*}
 \left(\sum_{i=1}^{n/K} q^KQ^{-1}Var(\widehat{T})Q^{-1}q^K \right)^{1+\delta/2} & \geq 
 \mynorm{q^KQ^{-1}}^{2+\delta}  \cdot \left(\sum_{i=1}^{n/K} \min_{k} Var(\widehat T_k)\right)^{1+\delta/2} 
\end{align*}   
which follows from the last steps of the proof on page 54. 
Define
\begin{align*}
k^* &:= \arg\max_{k \in K}  \Ex[|\Ex[\widehat T_{k} |\mathcal{R}_{k,i}] - \Ex[\widehat T_{k}]|^{2+\delta}]   \\
\overline{k} &:= \arg\max_{k \in K}  Var(\widehat T_k) \\
\underline{k} &:= \arg\min_{k \in K}  Var(\widehat T_k)
\end{align*}
then following the proof of Theorem 8 one can conclude that 
\begin{align*}
(\ref{Lyapunov}) \leq \frac{K\cdot \sum_{i=1}^{n/K} \Ex[|\Ex[\widehat T_{k^*} |\mathcal{R}_{k^*,i}] - \Ex[\widehat T_{k^*}]|^{2+\delta}]  }{\left(\sum_{i=1}^{n/K} Var(\widehat T_{k^*})\right)^{1+\delta/2} } \cdot \frac{\sum_{i=1}^{n/K} Var\left(\widehat T_{\overline k}\right)^{1+\delta/2}}{\sum_{i=1}^{n/K} Var\left(\widehat T_{\underline k}\right)^{1+\delta/2}} \ \leq K \cdot r_p(n/K)^{-\delta/2}
\end{align*}
which holds by the last display in the proof of Theorem 8 in the supplemental material of \cite{AtheyWager18} and Assumption \ref{A::Norm_I} (ii) which implies $\sum_{i=1}^{n/K} Var\left(\widehat T_{\overline k}\right)^{1+\delta/2}/\sum_{i=1}^{n/K} Var\left(\widehat T_{\underline k}\right)^{1+\delta/2}=O(1)$. 
Then (\ref{Lyapunov}) tends to zero by the rate restriction stated in Lemma \ref{Lem::ForestAN} which establishes asymptotic normality of I. \\

Then consider II. By applying Cauchy Schwarz and the lower bound for sieve variance we obtain
\begin{align*}
II \leq & \frac{\mynorm{q^KQ^{-1}}^2}{\mynorm{q^KQ^{-1}\Sigma_n}^2} \Ex[\mynorm{\widehat{\Pi}_{dm} - \overset{\circ}{\widehat{\Pi}}_{dm}}^2 ] \\
\leq &  \frac{1}{\min_{k} \sigma_{dm,k}^2} \sum_{k=1}^{K} \Ex[(\widehat{\Pi}_{dm,k} - \overset{\circ}{\widehat{\Pi}}_{dm,k})^2 ]  \\
\leq & K \cdot r_p(n/K)^{-1} \to 0
\end{align*}
which holds by the same reasoning as in the beginning of the proof of Theorem 8 in the supplement of \cite{AtheyWager18} and by the rate restriction stated in Lemma \ref{Lem::ForestAN}.

It remains to consider III. By the same reasoning as before we obtain
\begin{align*}
III &\leq \frac{\mynorm{\Ex[\widehat{\Pi}_{dm}] -\Pi_{dm}}}{ \min_{k} \sigma_{dm,k}} \\
 & \leq \frac{\sqrt{K}\cdot \max_k  \Ex[\widehat{\Pi}_{dm,k}] -\Pi_{dm,k} }{\min_{k} \sigma_{dm,k} } \\
 & \lesssim \sqrt{K} \cdot  \left( \frac{n}{K}\right)^{\frac{1}{2}\beta^\ast} =O\left(\sqrt{\frac{K}{ r_p(n/K)^{-\beta^\ast/b}}} \right) 
\end{align*}
where $\beta^\ast := 1+\epsilon-b/\beta_{\min}$. 
The last inequality follows from the Proof of Theorem 1 on page 40 of the supplement of \cite{AtheyWager18} and under these conditions $\beta^\ast< 0$ and $b/\beta_{\min} > 0$.
Under the rate restrictions in the Lemma the right hand-side above converges to zero which concludes the proof. 

\end{proof}	

\begin{proof}[Proof of Theorem \ref{thm::Inference}]
The proof begins with the following decomposition
\begin{align*}
& \widehat{f}_{B|X}(b|x) - f_{B|X}(b|x) \\
=& \underbrace{\widehat{f}_{A|X}(b- \widehat\beta(x)|x) - \widehat{f}_{A|X}(b- \beta(x)|x)}_{I} +  \underbrace{\widehat{f}_{A|X}(b- \beta(x)|x) - q^K(b-\beta(x))'Q^{-1}\Pi_{dm}(x)}_{II} \\
 + & \underbrace{q^K(b-\beta(x))'Q^{-1}\Pi_{dm}(x) -  q^K(b-\beta(x))'Q^{-1}\Pi(x)}_{III} \\
 + & \underbrace{q^K(b-\beta(x))'Q^{-1}\Pi(x) - f_{A|X}(b-\beta(x)|x)}_{IV}
\end{align*}
and proceeds by checking the individual terms separately.

For I it holds that 
\begin{align*}
I \leq & |\widehat{f}_{A|X}(b- \widehat\beta(x)|x) - \widehat{f}_{A|X}(b- \beta(x)|x) | \\
   \leq & \mynorm{D\widehat{f}_{A|X}(b -\beta(x)-(1- \tau)(\widehat{\beta}(x) -\beta(x))}\cdot \mynorm{\widehat{\beta}(x) -\beta(x)} \\
   = & o_p(v_n(b,w))
\end{align*}
for some $\tau \in (0,1)$ by consistency of $\widehat{f}_{A|X}$ and Assumption \ref{A::Rate} (v). This is due to the fact that $\mynorm{\widehat{\beta}(x) -\beta(x)}=r_p(n)^{-1}$ as $\widehat{\beta}$ is calculated on a sample proportional to $n$ and thus by the rate of $v_n(b,w)$ it always holds that $I= o_p(v_n(b,w))$.
For II it holds by Assumption \ref{A::Norm_I} (i) that
\begin{align*}
II/ v_n(b, x)  \overset{d}{\rightarrow}  N(0,1)
\end{align*}
For III we have from Lemma \ref{Lem1::Appendix} (ii)
that 
\begin{align*}
III \lesssim_P \tau^{-1}_K\sqrt{K}\cdot \sqrt{K/n^{2\varphi}}
\end{align*}
and thus
\begin{align*}
III/v_n(b,w) \lesssim_P \frac{\tau^{-1}_K\sqrt{K} \sqrt{K/n^{2\varphi}} }{\tau^{-1}_K\sqrt{K}r_p(n/K)^{-1}} =\frac{K}{n}=o(1)
\end{align*}

and thus $III/v_n(b,x) =o_p(1)$. \\


Finally for IV 
\begin{align*}
\left( P_Kf_{A|X}(b-\beta(x)|x) - f_{A|X}(b-\beta(x)|x)\right) = o(v_n(b,w))
\end{align*}
by Assumption \ref{A::Norm_II} (i) the approximation error is negligible compared to $v_n$. \\

For the final part of the statement it remains to show
\begin{align*}
\left| \frac{\widehat{v}_n(b,w) }{v_n(b,w)} -1 \right|=o_p(1) 
\end{align*}
Let $s(b,x)' = q^K(b-\beta(x))'Q^{-1}$ and $\widehat{s}(b,x)'=q^K(b - \widehat{\beta}(x))'\widehat{Q}^{-1}$ it holds that 
\begin{align*}
\left|\widehat v_n(b,w)- v_n(b,w)\right| & \leq \left|\mynorm{\widehat{s}(b,x)'\widehat{\Sigma}_n(x)}   - \mynorm{s(b,x)'\Sigma_n(x)}\right|\\
& \leq \mynorm{ \widehat{s}(b,x)'\widehat{\Sigma}_n(x) - s(b,x)'\Sigma_n(x)   } \\
& \leq \mynorm{ [\widehat{s}(b,x) - s(b,x)]'\widehat{\Sigma}_n(x) } +\mynorm{s(b,x)'(\widehat{\Sigma}_n(x) - \Sigma_n(x)) } \\
& \leq \max_{k} \widehat{\sigma}_{dm,k} \cdot \mynorm{\widehat{s}(b,x) - s(b,x) } + \max_{k} |\widehat{\sigma}_{dm,k} - \sigma_{dm,k} |\cdot \mynorm{s(b,x)}
\end{align*}
by the triangle inequality and the fact that $\widehat{\Sigma}_n(x),\Sigma_n(x)$ is a diagonal matrix. 

Further
\begin{align*}
\mynorm{\widehat{s}(b,x)- s(b,x)} & = \mynorm{ q^K(b - \widehat{\beta}(x))'\widehat{Q}^{-1} -  q^K(b-\beta(x))'Q^{-1}} \\
&\leq \mynorm{ [q^K(b - \widehat{\beta}(x)) -q^K(b-\beta(x))  ]'\widehat{Q}^{-1} + q^K(b-\beta(x))'(\widehat{Q}^{-1}-Q^{-1})  } \\
&\leq \mynorm{Dq^K(b-\beta(x)-(1- \tau)(\widehat{\beta}(x) -\beta(x)))}\cdot \mynorm{\widehat{Q}^{-1}}\cdot \mynorm{\widehat{\beta}(x) -\beta(x)}  \\
&+ \mynorm{q^K(\cdot-\beta(x))}\cdot\mynorm{\widehat{Q}^{-1}-Q^{-1}} \\
&\lesssim_P \sqrt{K}\tau_K^{-1}\cdot r_p(n)^{-1} + \sqrt{K}\cdot \sqrt{K\tau_K^{-1}\log(K)/n }
\end{align*}
Summarizing, by the properties of $v_n(b,x)$ it holds that 
\begin{align*}
\left|\frac{\widehat{v}_n(b,w)}{v_n(b,w)} -1 \right| & \lesssim \frac{\max_{k} \widehat{\sigma}_{dm,k}}{\max_{k} \sigma_{dm,k}} \cdot \frac{\max_{k}\sigma_{dm,k}}{\min_{k} \sigma_{dm,k}}\cdot \frac{\mynorm{\widehat{s}(b,x)-s(b,x)}}{ \sqrt{K}\tau_K^{-1}}  \\
& +  \frac{ \max_k | \widehat{\sigma}_{dm,k}-\sigma_{dm,k}|  }{\sigma_{dm,k^*} } \cdot \frac{\sigma_{dm,k^*} }{\min_k \sigma_{dm,k} } \cdot\frac{\mynorm{s(b,x)}}{\sqrt{K} \tau_K^{-1}} \\
& = \left(1 +o_p(1)\right)\cdot O(1)\cdot O_p\left(r_p(n)^{-1} + \sqrt{K \tau_K \log(K)/n} \right) + o_p(1)\cdot O(1) \\
& = o_p(1)
\end{align*}
with the right hand side converging to zero by consistency of $\widehat{\sigma}_{dm,k}$ for $\sigma_{dm,k}$, the fact that $\max_k \sigma_{dm,k}/\min_k \sigma_{dm,k}=O(1)$ and the rate restriction in Assumption \ref{A::Norm_I} (ii).
\end{proof}
\begin{proof}[Proof of Lemma \ref{Lem::CV}]	
By definition of $\widehat{f}_{B|X}(b|x)$ and the last display in the proof of Lemma \ref{Lem::Ident} it holds that 
	\begin{align*}
	& \int_{\mathbb{R}^2} \widehat{f}_{B|X}(b|x)f_{B|X}(b|x)db \\
	=&\int_{\mathbb{R}^2} q^K(b - \widehat  \beta(x))'\widehat Q^{-1}\widehat{\Pi}_{dm}(x) \left[\frac{1}{(2\pi)^2}\int_{\mathbb{R}^2} |t| \exp(-ib'(t,tw))\phi_{Y|X, W}(t|x,w)dtdw\right]db\\ 
	=&\int_{\mathbb{R}^2} q^K(b - \widehat  \beta(x))'\widehat Q^{-1}\widehat{\Pi}_{dm}(x) \left[\frac{1}{(2\pi)^2}\int_{\mathbb{R}^2} |t| \exp[-it(y-b'(1,w))]f_{Y|X,W}(y|x,w)dydtdw\right]db
	\end{align*}
with the last equality following from plugging in the definition for $\phi_{Y|X, W}(t|x,w)$.
Rearranging and using the definition of $V(Y,W)$ yields,	
\begin{align*}
& \int_{\mathbb{R}^2} \widehat{f}_{B|X}(b|x)f_{B|X}(b|x)db \\
=& \int_{\mathbb{R}} \Ex[V(Y,W)'\widehat Q^{-1}\widehat{\Pi}_{dm}(X) |X=x, W=w]dw
\end{align*}
which is the statement of the Lemma. 
\end{proof}
\begin{lem}\label{Lem1::Appendix}
	Let Assumptions \ref{A::Rate}- \ref{A::MLRates} be satisfied and $q^K$ be the Hermite function basis. \\
	\begin{enumerate}[(i)]
	\item If $\Pi_{dm}(x) = \Ex[T_k(W-\widehat{g}(X), Y-\widehat{m}(X,W)) |X=x]$ it holds that
	\begin{align*}
	\mynorm{\Pi_{dm}(x) - \Pi(x)}^2 = O(K^{2}\cdot n^{-2\varphi})
	\end{align*} 
	\item If $\Pi_{dm}(x) = \Ex[T_k(W, Y-\widehat{m}(X,W)) |X=x]$ then 
	\begin{align*}
	\mynorm{\Pi_{dm}(x) - \Pi(x)}^2 = O(K\cdot n^{-2\varphi})
	\end{align*}
\end{enumerate}
\end{lem}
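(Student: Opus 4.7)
The plan is to bring the Euclidean norm inside the conditional expectation by Jensen's inequality, reducing the lemma to a \emph{pointwise} bound on $\|T(\cdot)-T(\cdot)\|_E^2$, which I would then control using the explicit structure of the transformation $T$ and the properties of the Hermite function basis. Both parts share this outer step; they differ only in how one handles the pointwise difference.

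For part (ii) I would start from
\begin{equation*}
T_k(W,Y-\widehat m)-T_k(W,Y-m)=\int(\mathcal{F}q_k)(-t,-tW)\bigl[\exp(it(Y-\widehat m))-\exp(it(Y-m))\bigr]d\nu(t),
\end{equation*}
use the elementary inequality $|\exp(it(Y-\widehat m))-\exp(it(Y-m))|\leq|t|\,|\widehat m-m|$, and apply Cauchy--Schwarz in $d\nu(t)$ to obtain
\begin{equation*}
|T_k(W,Y-\widehat m)-T_k(W,Y-m)|^2\leq(\widehat m-m)^2\int t^2d\nu(t)\int|(\mathcal{F}q_k)(-t,-tW)|^2d\nu(t).
\end{equation*}
Summing over $k=1,\dots,K$, using that Hermite functions are eigenfunctions of $\mathcal{F}$ so $|\mathcal{F}q_k|=\sqrt{2\pi}\,|q_k|$, together with $\sup_b\|q^K(b)\|_E^2\lesssim K$ from Assumption \ref{A::Rate}(i) and $\int t^2d\nu(t)<\infty$ from Assumption \ref{A::Rate}(iv), I get $\|T(W,Y-\widehat m)-T(W,Y-m)\|_E^2\lesssim K\,(\widehat m(X,W)-m(X,W))^2$. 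Taking conditional expectation given $X=x$ and invoking the ML rate of Assumption \ref{A::MLRates}(i) yields the claimed $O(K\,n^{-2\varphi})$.

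For part (i) I would add and subtract $T(W-g,Y-\widehat m)$ to split the error into a $g$-part and an $m$-part. The $m$-part is handled exactly as in (ii). For the $g$-part I use the mean-value theorem in the second argument of $\mathcal{F}q_k$, which introduces the partial derivative $\partial_2 q_{k_1,k_2}$. Controlling this via the Hermite recursion $h_n'=\sqrt{n/2}\,h_{n-1}-\sqrt{(n+1)/2}\,h_{n+1}$ gives the pointwise estimate $|\partial_2 q_{k_1,k_2}(a,b)|^2\lesssim k_2\,\bigl(h_{k_2-1}(b)^2+h_{k_2+1}(b)^2\bigr)\,q_{k_1}(a)^2$. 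Summing over the $K$ indices, using $k_2\leq K$ and reapplying $\sup_b\|q^{K'}(b)\|_E^2\lesssim K$ on the slightly enlarged index set $K'\sim K$, yields $\sum_k|\partial_2 q_k(a,b)|^2\lesssim K^2$. Combined with Cauchy--Schwarz in $d\nu(t)$ and $\int t^2d\nu(t)<\infty$, this produces the pointwise bound $\|T(W-\widehat g,Y-\widehat m)-T(W-g,Y-\widehat m)\|_E^2\lesssim K^2(\widehat g(X)-g(X))^2$. Adding the $m$-contribution, taking conditional expectation in $X=x$, and applying the ML rates from Assumption \ref{A::MLRates}(i) delivers the claimed $O(K^2\,n^{-2\varphi})$.

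The main obstacle is the derivative bound $\sum_k|\partial_2 q_k|^2\lesssim K^2$: the algebra itself is routine, but one must invoke the Hermite recursion rather than a crude pointwise estimate, so that each derivative contributes exactly one extra factor of $K$ via $k_2\leq K$ and not more. This is precisely the source of the additional $K$ described in the paragraph following the corollary. Everything else --- Jensen, Cauchy--Schwarz, the Lipschitz bound on the complex exponential, and the ML convergence rate --- is standard.
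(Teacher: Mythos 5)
Your proposal matches the paper's proof in all essentials: Jensen's inequality to reduce to a pointwise bound on $\|T(\cdot)-T(\cdot)\|_E^2$, the eigenfunction property of Hermite functions, Cauchy--Schwarz in $d\nu(t)$ together with $\int t^2 d\nu(t)<\infty$, and the Hermite derivative recursion supplying exactly one extra factor of $K$ in part (i). The only cosmetic differences are that you use the Lipschitz bound on the complex exponential where the paper applies a mean value argument to the real and imaginary parts of $T_k$, and that you separate the $g$- and $m$-perturbations by adding and subtracting rather than taking a joint gradient.
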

\begin{proof}[Proof of Lemma \ref{Lem1::Appendix}]
	The proof begins with case (i). By the definitions made earlier it holds
	\begin{align*}
	& \mynorm{ \Pi_{dm}(x) -  \Pi(x)  }^2\\
	\leq & \sum_{k=1}^{K} \Ex\left[\left|T_k(W-\widehat{g}(X), Y-\widehat{m}(X,W)) - T_k(W-g(X), Y-m(X,W)) \right|^2  |X=x\right] 
	\end{align*}
	Then by the properties of complex numbers and a mean value argument  
	\begin{align*}
	& \left|T_k(W-\widehat{g}(X), Y-\widehat{m}(X,W)) - T_k(W-g(X), Y-m(X,W)) \right|^2 \\
	= &  [\text{Re}(T_k(W-\widehat{g}(X), Y-\widehat{m}(X,W))) -\text{Re}( T_k(W-g(X), Y-m(X,W)))]^2 \\
	& + [\text{Im}(T_k(W-\widehat{g}(X), Y-\widehat{m}(X,W))) -\text{Im}( T_k(W-g(X), Y-m(X,W)))]^2 \\
	=& \nabla \text{Re}(T_k)(\xi_1)'(\widehat{g}(X)-g(X), \widehat{m}(X,W)-m(X,W))^2 \\
	& +\nabla \text{Im}(T_k)(\xi_2)'(\widehat{g}(X)-g(X), \widehat{m}(X,W)-m(X,W))^2
	\end{align*}
	for $\xi_j=(W,Y)-\tau_j\cdot (g(X),m(X,W)) +(1-\tau_j)\cdot(\widehat g(X),\widehat m(X,W))$ where $\tau_j \in (0,1)$.
	
	Hence
	\begin{align*}
		& \sum_{k=1}^{K} \mynorm{ \Pi_{dm}(x) - \Pi(x)  }^2\\
	\leq & \sum_{k=1}^{K} \Ex\left[ \mynorm{\nabla\text{Re}(T_k)(\xi_1) }^2 +\mynorm{\nabla\text{Im}(T_k)(\xi_1) }^2 |X=x\right] \cdot \mynorm{\widehat{g}(x)- g(x), \Ex[\widehat{m}(X,W)-m(X,W)|X=x]}^2.
	\end{align*}
	Using the definition of $T_k$ along with the eigenfunction property of Hermite functions that 
	\begin{align*}
	\mathcal{F}q^K(-t, -tw) = \sqrt{2\pi} i^{k-1}q^K(-t, -tw)
	\end{align*}
	we obtain 
	\begin{align*}
	\text{Re}(T_k)(y,w) & = \int_{\mathbb{R}} q^K(-t, -tw)\cdot \cos(\frac{\pi\cdot K}{2}+ty) d\nu(t) \\
	\text{Im}(T_k)(y,w) & = \int_{\mathbb{R}} q^K(-t, -tw)\cdot \sin(\frac{\pi\cdot K}{2}+ty) d\nu(t)
	\end{align*}
	Let $\xi_1 = (\xi_w, \xi_y)$ then we have for the real part 
\begin{align}
 \mynorm{\nabla \text{Re}(T_k)(\xi_1) }^2  = & \frac{\partial \text{Re}(T_k)(\xi_1)}{\partial w}^2 + \frac{\partial \text{Re}(T_k)(\xi_1)}{\partial y}^2  \label{Gradient T_k} \\
  \leq  & \int_{\mathbb{R}} \frac{\partial q^K(-t, -t\xi_w)}{\partial w}^2\cos(\frac{\pi k}{2}+t\xi_y)^2 t^2 d\nu(t)  \nonumber\\
 & + \int_{\mathbb{R}} q^K(-t, -t\xi_w)^2\sin(\frac{\pi k}{2}+t\xi_y)^2t^2 d\nu(t) \nonumber  \\
 \leq & \sup_{b \in \mathbb{R}^2} \frac{ \partial q^K(b_1, b_2)}{\partial b_2}^2 \cdot \int_{\mathbb{R}} t^2 d\nu(t) + \sup_{b \in \mathbb{R}^2} q^K(b_1, b_2)^2 \cdot \int_{\mathbb{R}} t^2 d\nu(t)     \nonumber \\
 \lesssim & K \nonumber
\end{align}
which is due to the fact that the distribution $\nu$ has finite second moments, the boundedness of Hermite functions and the following property of the derivative of Hermite functions
\begin{align*}
\partial q^k(b)/\partial b = \sqrt{\frac{k}{2}}q^{k-1}(b) -  \sqrt{\frac{k+1}{2}}q^{k+1}(b). 
\end{align*} 
The argument is analogous for the imaginary part and summarizing
	\begin{align*}
& \mynorm{ \Pi_{dm}(x) - \Pi(x)  }^2\\
\lesssim & \sum_{k=1}^{K} 2\cdot K \cdot \mynorm{\widehat{g}(x)- g(x), \Ex[\widehat{m}(X,W)-m(X,W)|X=x]}^2 \\
\lesssim & K^2 \cdot  O_p(n^{-2\varphi}) 
\end{align*}
The proof for part (ii) is analogous. Here (\ref{Gradient T_k}) is only the derivative with respect to y and it immediately follows from (\ref{Gradient T_k}) that
\begin{align*}
\mynorm{\nabla \text{Re}(T_k)(\xi_1) }^2  \lesssim 1
\end{align*}
and thus
\begin{align*}
\mynorm{\Pi_{dm}(x) - \Pi(x)}^2 = O(K\cdot n^{-2\varphi})
\end{align*}
which concludes the proof. 
\end{proof}
\newpage
 \bibliography{BiB.bib}
  
\end{document}